\documentclass[lettersize,journal]{IEEEtran}
\usepackage{amsmath,amsfonts}
\usepackage{algorithmic}
\usepackage{algorithm}
\usepackage{array}
\usepackage[caption=false,font=normalsize,labelfont=sf,textfont=sf]{subfig}
\usepackage{textcomp}
\usepackage{stfloats}
\usepackage{url}
\usepackage{verbatim}
\usepackage{graphicx}
\usepackage{cite}
\usepackage{pstricks}
\usepackage{auto-pst-pdf}
\usepackage{color}
\usepackage{amsthm}
\usepackage{amssymb}
\usepackage{booktabs}
\hyphenation{op-tical net-works semi-conduc-tor IEEE-Xplore}




\newtheorem{proposition}{Proposition}
   \newtheorem{example}{Example}
   
      \newtheorem{observation}{Observation}

\begin{document}

\title{Introducing Modelling, Analysis and Control\\of Three-Phase Electrical Systems  \\Using  Geometric  Algebra}

\author{Manel Velasco, 
            Isiah Zaplana,
            Arnau Dòria-Cerezo, 
            Josué Duarte and Pau Martí
\thanks{M.~Velasco and P.~Mart\'{i} 
 are with the Automatic Control Department, Universitat Politècnica de Catalunya,  Barcelona, Spain, 
email: \{manel.velasco, josue.duarte, pau.marti\}@upc.edu}
\thanks{J.~Duarte is with the Electrical Engineering, Escola Universit\`{a}ria Salesiana de Sarri\`{a}, 08017 Barcelona, Spain, 
email: jnduarte@euss.cat}
\thanks{I.~Zaplana is with the  Automatic Control Department and Inst. of Industrial and Control Engineering, Universitat Politècnica de Catalunya,  Barcelona, Spain, 
email: isiah.zaplana@upc.edu}
\thanks{A.~Dòria-Cerezo is with the Dept. of Electrical Engineering and Inst. of Industrial and Control Engineering, Universitat Politècnica de Catalunya,  Barcelona, Spain, 
email: arnau.doria@upc.edu}
\thanks{Manuscript received April 19, 2021; revised August 16, 2021.}}

\markboth{Journal of \LaTeX\ Class Files,~Vol.~14, No.~8, August~2021}%
{Shell \MakeLowercase{\textit{et al.}}: A Sample Article Using IEEEtran.cls for IEEE Journals}


\maketitle

\begin{abstract}

This paper introduces a novel framework based on geometric algebra (GA) for the modeling, analysis, and control of three-phase electrical systems, with a focus on unbalanced scenarios.
The proposed approach utilizes GA to represent these systems using simplified single-input/single-output (SISO) models, significantly reducing their mathematical complexity while maintaining accuracy.
The framework introduces GA-valued transfer functions (GA-TFs), enabling the extension of well-established multivariable (MIMO) techniques, such as Youla-Kučera parameterization, into the GA domain as an illustrative example. This demonstrates the adaptability of classical tools within the new algebra.
The validity and potential of the approach are showcased through numerical examples of balanced and unbalanced systems, as well as physical implementations in a laboratory setup.
The results confirm that GA provides a robust and scalable foundation for simplifying the analysis and control of three-phase systems, paving the way for future theoretical and practical advancements.
\end{abstract}

\begin{IEEEkeywords}
Three-phase electrical systems, geometric algebra, modelling, balanced, unbalanced, stability, control
\end{IEEEkeywords}

\section*{Nomenclature}
\begin{IEEEdescription}[\IEEEusemathlabelsep\IEEEsetlabelwidth{SUGAR}]
\item[AC] Alternate current.
\item[GA] Geometric algebra.
\item[MIMO] Multiple-input/multiple-output.
\item[SISO] Single-input/single-output.
\item[R-TF] Real-valued transfer function.
\item[C-TF] Complex-valued transfer function.
\item[GA-TF] Geometric algebra-valued transfer function.
\item[SUGAR] Symbolic and User-friendly Geometric Algebra Routines.
\end{IEEEdescription}

\section{Introduction} 
\label{sec:Intro}


\IEEEPARstart{T}{hree-phase} electrical systems are multivariable systems whose modeling, analysis, and control vary in complexity depending on the chosen mathematical framework~\cite{Har07}. These systems exhibit inherently coupled behavior, where a change in one input may affect multiple outputs, complicating their control~\cite{Sko05}.

In recent years, geometric algebra (GA) has emerged as a promising tool in electrical engineering, offering simplified representations and novel perspectives~\cite{Cha14, Bay21}. Motivated by its potential, this paper introduces GA as a framework for modeling, analyzing, and controlling three-phase systems. The adoption of GA simplifies control analysis and design, enabling reduced-order models and opening new research directions.

This research shows that multivariable three-phase systems, traditionally modeled as real-valued MIMO systems, can be effectively represented as GA-valued SISO systems. Compared to state-of-the-art methods, the GA representation reduces complexity in both system order and linearity. To achieve this, the paper presents a transformation mapping real-valued MIMO models, characterized by standard real-valued transfer functions (R-TFs), to GA-valued SISO models with GA transfer functions (GA-TFs), expressed using multivectors.

The use of GA in control introduces a new perspective for analyzing and designing closed-loop systems. Stability analysis in the GA framework simplifies to studying the roots of a real-valued polynomial, as in traditional SISO systems, allowing direct application of standard stability tools. Additionally, the framework extends classical MIMO techniques, such as Youla-Ku\u{c}era parameterization, to the GA domain, providing a systematic method for designing stabilizing controllers. As an example, the paper shows how this parameterization can be adapted to design controllers that decouple the real-valued MIMO closed-loop system, effectively diagonalizing the transfer matrix.

By utilizing GA, this work provides a unified framework that simplifies three-phase system modeling and control while offering a solid foundation for further advancements in multivariable control.

\subsection{Motivating Reduced-Order Representation}

Fig.~\ref{fig:contributions} sketches existing closed-loop schemes for three-phase electrical systems and the new proposal.
It reveals different complexity issues that arise depending on the underlying mathematical framework.
In all sub-figures, $C_i(\mathrm{p})$ and $G_i(\mathrm{p})$ denote controller and plant transfer functions in different domains.
The figure is fully explained throughout the paper.

As shown in Fig.~\ref{fig:rmimo}, a three-phase dynamic system (and its controller), in the standard $\alpha\beta$ (or $dq$ frame), can be represented by two-phase quantities and modeled as a $2\times 2$ linear real-valued MIMO system, characterized by a matrix of R-TFs (transfer matrix) relating each input to each output.  
An initial model reduction effort is achieved when the two-phase quantities are organized as complex numbers, and these systems can then be represented using complex-valued nonlinear SISO models, as illustrated in Fig.~\ref{fig:cmimo}.  
The plant is described by one or two complex-valued transfer functions (C-TF) depending on whether the system is balanced or unbalanced~\cite{Har20}.
The non-linearity, which appears for unbalanced scenarios where the complex conjugate of the input going through $G_2(\mathrm{p})$ is required, poses difficulties in the analysis and design phases.  

The GA-based approach presented in this work shows that three-phase electrical systems can instead be described by GA-valued linear SISO models, applicable to both balanced and unbalanced scenarios, as shown in Fig.~\ref{fig:gmimo}. In this framework, the plant and controller are characterized by a single GA-TF each, reducing the model complexity with respect to state-of-the-art approaches both in terms of dimensionality and ensuring linearity. This simplification opens opportunities to explore new perspectives for analysis and design within the GA framework, providing a complementary approach to existing methods (Fig.~\ref{fig:rmimo} and~\ref{fig:cmimo}).

The paper illustrates the main contributions through examples, where the symbolic and numeric GA computations have been performed using the Symbolic and User-friendly Geometric Algebra Routines (SUGAR) Matlab toolbox~\cite{Vel24}.
Experimental results corroborate the feasibility of the designs.
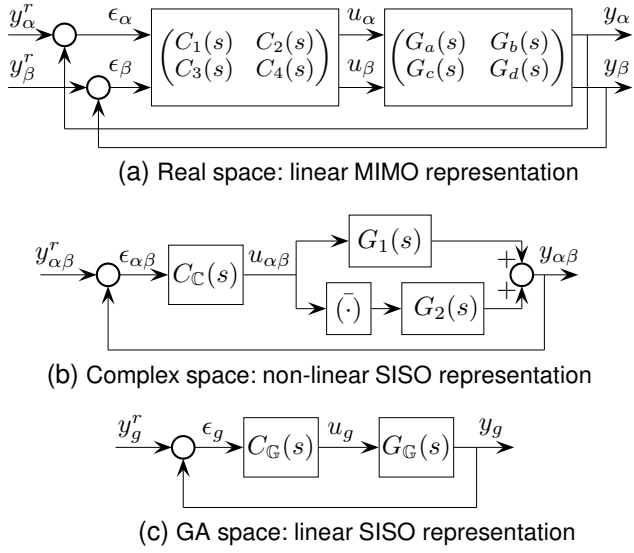
\begin{figure}[!t]
\centering
\subfloat[\small Real space: linear MIMO representation]{
\hspace{-1.35cm}
\begin{pspicture}(11,1.85)
\psframe[linewidth=0.01](3.5,0.55)(6.0,1.85)  \rput(4.8,1.2){\small
$\begin{pmatrix} C_1(\mathrm{p})\! & \!C_2(\mathrm{p}) \\ C_3(\mathrm{p})\! & \!C_4(\mathrm{p}) \end{pmatrix}$
}
\psline[linewidth=0.01,arrowsize=5pt]{->}(1.6,1.5)(2.2,1.5)   \rput(1.8,1.75){$y_{\alpha}^r$}
\pscircle(2.35,1.5){0.15}
\psline[linewidth=0.01,arrowsize=5pt]{->}(2.5,1.5)(3.5,1.5)  \rput(3.1,1.75){$\epsilon_{\alpha}$}
\psline[linewidth=0.01,arrowsize=5pt]{->}(1.6,0.8)(2.65,0.8) \rput(1.8,1.05){$y_{\beta}^r$}
\pscircle(2.8,0.8){0.15}
\psline[linewidth=0.01,arrowsize=5pt]{->}(2.95,0.8)(3.5,0.8)  \rput(3.1,1.05){$\epsilon_{\beta}$}
\psline[linewidth=0.01,arrowsize=5pt]{->}(6,1.5)(6.6,1.5)  \rput(6.3,1.75){$u_{\alpha}$}
\psline[linewidth=0.01,arrowsize=5pt]{->}(6,0.8)(6.6,0.8)  \rput(6.3,1.05){$u_{\beta}$}
\psframe[linewidth=0.01](6.6,0.55)(9.1,1.85)  \rput(7.9,1.2){\small
$\begin{pmatrix} G_a(\mathrm{p})\! & \!G_b(\mathrm{p}) \\ G_c(\mathrm{p})\! & \!G_d(\mathrm{p}) \end{pmatrix}$
}
\psline[linewidth=0.01,arrowsize=5pt]{->}(9.1,1.5)(9.9,1.5)  \rput(9.7,1.75){$y_{\alpha}$}
\psline[linewidth=0.01,arrowsize=5pt]{->}(9.1,0.8)(9.9,0.8)  \rput(9.7,1.05){$y_{\beta}$}
\psline[linewidth=0.01]{-}(9.3,1.5)(9.3,0.25) 
\psline[linewidth=0.01,arrowsize=5pt]{-}(9.3,0.25)(2.35,0.25)  
\psline[linewidth=0.01,arrowsize=5pt]{->}(2.35,0.25)(2.35,1.35) 
\psline[linewidth=0.01]{-}(9.55,0.8)(9.55,0.0) 
\psline[linewidth=0.01,arrowsize=5pt]{-}(9.55,0)(2.8,0)  
\psline[linewidth=0.01,arrowsize=5pt]{->}(2.8,0)(2.8,0.65) 
\end{pspicture}
\label{fig:rmimo}
}
\\
\subfloat[\small Complex space: non-linear SISO representation]{
\hspace{-1cm}\begin{pspicture}(10,1.8)
\psframe[linewidth=0.01](3.5,0.6)(4.5,1.4)  \rput(4.0,1.0){$C_{\mathbb{C}}(\mathrm{p})$}
\psline[linewidth=0.01,arrowsize=5pt]{->}(1.8,1.0)(2.55,1.0) \rput(2,1.25){$y^r_{\alpha\beta}$}
\pscircle(2.7,1.0){0.15}
\psline[linewidth=0.01,arrowsize=5pt]{->}(2.85,1.0)(3.5,1.0)  \rput(3.1,1.2){$\epsilon_{\alpha\beta}$}
\psline[linewidth=0.01,arrowsize=5pt]{-}(4.5,1.0)(5.2,1.0)  \rput(4.85,1.2){$u_{\alpha\beta}$}
\psline[linewidth=0.01,arrowsize=5pt]{-}(5.2,1.0)(5.2,0.55)
\psline[linewidth=0.01,arrowsize=5pt]{->}(5.2,0.55)(5.6,0.55)
\psframe[linewidth=0.01](5.6,0.2)(6.2,0.9)  \rput(5.9,0.55){$\bar{(\cdot)}$}
\psline[linewidth=0.01,arrowsize=5pt]{->}(6.2,0.55)(6.6,0.55) 
\psframe[linewidth=0.01](6.6,0.2)(7.7,0.9)  \rput(7.2,0.55){$G_{2}(\mathrm{p})$}
\psline[linewidth=0.01,arrowsize=5pt]{-}(7.7,0.55)(8.2,0.55) 
\psline[linewidth=0.01,arrowsize=5pt]{->}(8.2,0.55)(8.2,0.85) 
\pscircle(8.2,1.0){0.15}
 \rput(8.0,1.2){$+$}
  \rput(8.0,0.8){$+$}

\psline[linewidth=0.01,arrowsize=5pt]{-}(5.2,1.0)(5.2,1.45)
\psline[linewidth=0.01,arrowsize=5pt]{->}(5.2,1.45)(5.9,1.45)

\psframe[linewidth=0.01](5.9,1.1)(7.0,1.8)  \rput(6.45,1.45){$G_{1}(\mathrm{p})$}
\psline[linewidth=0.01,arrowsize=5pt]{-}(7.0,1.45)(8.2,1.45)
\psline[linewidth=0.01,arrowsize=5pt]{->}(8.2,1.45)(8.2,1.15)

\psline[linewidth=0.01,arrowsize=5pt]{->}(8.35,1.0)(8.95,1.0)  \rput(8.7,1.25){$y_{\alpha\beta}$}
\psline[linewidth=0.01]{-}(8.5,1.0)(8.5,0.0) 
\psline[linewidth=0.01,arrowsize=5pt]{-}(8.5,0.0)(2.7,0.0)  
\psline[linewidth=0.01,arrowsize=5pt]{->}(2.7,0.0)(2.7,0.85) 
\end{pspicture}
\label{fig:cmimo}
}
\\
\subfloat[\small GA space: linear SISO representation]{
\begin{pspicture}(10,1.2)
\psframe[linewidth=0.01](3.5,0.4)(4.5,1.2)  \rput(4,0.8){$C_{\mathbb{G}}(\mathrm{p})$}
\psline[linewidth=0.01,arrowsize=5pt]{->}(1.8,0.8)(2.55,0.8) \rput(2,1.05){$y^r_g$}
\pscircle(2.7,0.8){0.15}
\psline[linewidth=0.01,arrowsize=5pt]{->}(2.85,0.8)(3.5,0.8)  \rput(3.1,1.0){$\epsilon_g$}
\psline[linewidth=0.01,arrowsize=5pt]{->}(4.5,0.8)(5.3,0.8)  \rput(4.8,1.0){$u_g$}
\psframe[linewidth=0.01](5.3,0.4)(6.3,1.2)  \rput(5.8,0.8){$G_{\mathbb{G}}(\mathrm{p})$}
\psline[linewidth=0.01,arrowsize=5pt]{->}(6.3,0.8)(7.1,0.8)  \rput(6.8,1.05){$y_g$}
\psline[linewidth=0.01]{-}(6.6,0.8)(6.6,0.0) 
\psline[linewidth=0.01,arrowsize=5pt]{-}(6.6,0.0)(2.7,0.0)  
\psline[linewidth=0.01,arrowsize=5pt]{->}(2.7,0.0)(2.7,0.65) 
\end{pspicture}
\label{fig:gmimo}
}
\caption{Control of three-phase electrical systems in different spaces}
\label{fig:contributions}
\vspace{-0.35cm}
\end{figure}

\subsection{Related Work}
The application of GA to the electrical engineering field is not new and in AC circuit analysis is even common~\cite{Hit24}.
But it has been mainly bounded to re-define spare concepts such as the (complex) frequency~\cite{Mil22},  in the analysis of the apparent power~ \cite{Men07, Cas08, Cas10, Cas12, Mon21a, Mon21}, or in the analysis of power flow~\cite{Mon22, Eid22, Mon23}.
Moreover, in the context of three-phase electrical systems, some of the previous results such as~\cite{Mon21a} focus on the signal analysis side  while the current paper targets the system dynamics side.
This difference implies that previous works cover the steady state analysis while the current paper offers an extension also to the transient dynamics. 

In the field of system dynamics and control (apart from the robotics discipline where GA has been widely applied, see eg.~\cite{Low23}), the application of GA is starting to occur.   Second order systems expressed in terms of generalized coordinates is investigated using  GA language in~\cite{Gar01}. Well-known Lyapunov stability conditions and sliding mode control conditions are re-visited in terms of GA in \cite{Sir22} and \cite{Sir23}, respectively. 

Previous works differs from the current paper domain which is  closed-loop control of three-phase electrical systems. The modeling effort presented in this paper is not bounded only to second order systems, and the presented control tools in the newly defined GA framework belong to the discipline of control of linear SISO systems.

\subsection{Summary of Contributions and Paper Structure}

The main contributions of this work regarding the modeling and analysis of three-phase electrical systems, for both balanced and unbalanced scenarios, can be summarized as follows:
\begin{itemize}
\item Definition of the GA framework, establishing a foundation for applying GA to three-phase systems.
\item New GA-valued linear SISO model, reducing the complexity of multivariable systems.
\item As an example of how classical MIMO theories can be seamlessly extended to GA, this adapts the construction of all stabilizing controllers to the GA domain.
\item Using this adaptation, the complete family of GA decoupling controllers is derived, illustrating the potential of the framework to address fundamental MIMO problems in a simpler manner.
\end{itemize}

This work focuses on theoretical analysis, demonstrating how expanding the design space from real to GA-valued domains simplifies tools and makes problems more mathematically tractable (e.g., reducing nonlinearities).
Working in a higher-dimensional space inherently requires new theoretical approaches, such as GA, which may initially appear challenging.
However, these efforts are rewarded with significant benefits in terms of simplicity and flexibility.

The GA techniques introduced in this study offer new avenues for improving the analysis and control of three-phase electrical systems. They enable the development of alternative tools that complement existing methods, simplifying and enhancing analysis and design, particularly in scenarios where traditional frameworks lack intuitiveness.

The paper is structured as follows. Section \ref{ss:back} reviews real and complex-valued models of three-phase electrical systems. Section \ref{ss:geometric} presents the GA framework and the new GA model. Sections \ref{ss:stability} and \ref{ss:youla} discuss closed-loop stability and controller design in the GA framework. Section \ref{ss:res} presents laboratory experiments, and Section \ref{ss:con} concludes the paper.

{\bf Notation.}
$\mathbb{C}^n$ and $\mathbb{R}^n$ denote the complex and real $n$-dimensional space; $\bar x$ denotes the conjugate of a complex vector $x\in\mathbb{C}^n$; $j\in \mathbb{C}$ is the imaginary number such that $j^2=-1$; $\mathbb{F}$ denotes the R-TF space and $G(\mathrm{p})\in \mathbb{F}$ denotes a R-TF,  where the standard TF Laplace argument $\mathrm{s}\in\mathbb{C}$ has been replaced by $\mathrm{p}$ (further defined below). For notation convenience, the space of R-TFs will also be denoted by the GA description given by $\mathcal{F}_{0,0}$, which is explained in detail in the Appendix.
Then, $G_{\mathbb{R}}(\mathrm{p})=G(\mathrm{p})\in \mathcal{F}_{0,0}$ denotes a R-TF, i.e., $\mathbb{F}$ and $\mathcal{F}_{0,0}$ are interchangeable.
Similarly, the space of C-TFs is denoted by $\mathcal{F}_{0,1}$ and $G_{\mathbb{C}}(\mathrm{p})=G_a(\mathrm{p})+jG_b(\mathrm{p})\in \mathcal{F}_{0,1}$ denotes a C-TF, with $G_a(\mathrm{p}), G_b(\mathrm{p})\in{\mathcal{F}_{0,0}}$.

Finally, the space of GA-TF is denoted by $\mathcal{F}_{2,0}$ where $G_{\mathbb{G}}(\mathrm{p})=G_a(\mathrm{p})e_0+G_b(\mathrm{p})e_1+G_c(\mathrm{p})e_2+G_d(\mathrm{p})e_{12}\in \mathcal{F}_{2,0}$ denotes a GA-TF, with $G_a(\mathrm{p}), G_b(\mathrm{p}), G_c(\mathrm{p}), G_d(\mathrm{p}) \in \mathcal{F}_{0,0}$ and  $e_0$, $e_1$, $e_2$, and $e_{12}$ denote  its basis elements.
In all transfer functions, $\mathrm{p}\in \mathcal{F}_{2,0}$. The GA conjugate of $G_{\mathbb{G}}(\mathrm{p})\in \mathcal{F}_{2,0}$ is given by $\bar G_{\mathbb{G}}(\mathrm{p})=G_a(\mathrm{p})e_0-G_b(\mathrm{p})e_1-G_c(\mathrm{p})e_2-G_d(\mathrm{p})e_{12}$. The term $\underline x$  denotes the dual of the GA element $x\in\mathcal{F}_{p,q}$.
The meaning of the  $p, q$ values accompanying each space $\mathcal{F}_{p,q}$ is given in the Appendix. Transfer function matrices whose entries are R-TFs, C-TFs or GA-TFs belong to particular $\mathcal{F}_{p,q}^{n \times m}$ spaces. 

\section{Real-valued and Complex-valued Models}\label{ss:back}
State-of-the-art real-valued and complex-valued approaches to the modeling of three-phase electrical systems are reviewed.
Among the numerous systems that could serve as examples, including those with capacitors and inductors, we selected a model that can be easily transformed between balanced and unbalanced configurations.
To focus on the core aspects of the comparison, we deliberately chose a simple model that effectively illustrates the differences.

\begin{example}[Illustrative three-phase electrical system]\label{e:ex1}

 \begin{figure}
 	\centering
	\includegraphics[width=5cm]{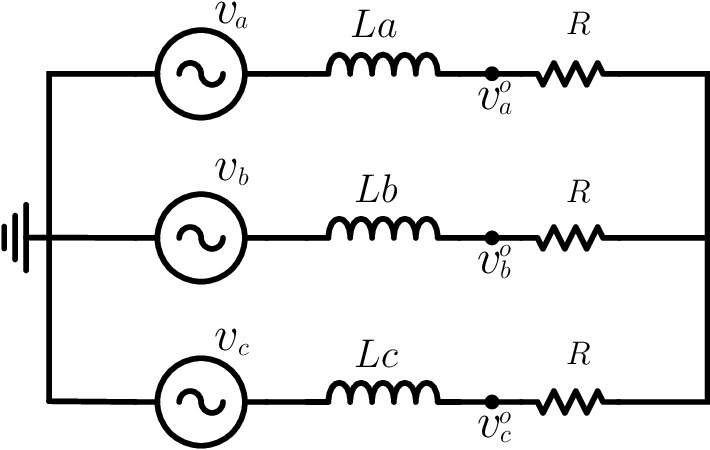}	
	\caption{Scheme for the three-phase electrical system used in the examples.}
	\label{fig:scheme3}
\end{figure}

Fig.~\ref{fig:scheme3} shows the system used throughout the paper and provides the experimental setup to illustrate how to deploy a GA controller in a real scenario.  It is composed by three ideal voltage sources that feed a load~$R$, over a transmission line with  inductances $L_a$, $L_b$ and $L_c$. Depending on the values of the inductances, the system is balanced (when $L_a=L_b=L_c=L$) or unbalanced (for example when $L_a=L_c=L$ and $L_b=L_u$). Whenever numerical values are required, they correspond to the configuration given by $L=3\cdot 10^{-3}$H, $L_u=3\cdot 10^{-2}$H, $R=22\Omega$, where voltage sinusoidal signals have an amplitude of $V=155$ V and a frequency of $\omega=2\pi60$ rad/s.

The system's inputs (or the control actions when a controller is connected) are the three voltages \(v_a(t)\), \(v_b(t)\), and \(v_c(t)\). The outputs are the load voltages \(v_a^o(t)\), \(v_b^o(t)\), and \(v_c^o(t)\). The system will be analyzed in both open-loop and closed-loop configurations under various scenarios. In the closed-loop evaluation, the controller's objective will be explicitly defined.

The example presented in this work is intentionally simple to illustrate how the fundamental principles of the GA framework apply to modelling and analysing three-phase electrical systems. Its adaptability to both balanced and unbalanced cases ensures broad applicability, while its simplicity allows for easy reproduction in laboratory settings. This approach emphasizes key comparative aspects and computational implications without the complexity of more intricate examples.
\end{example}
\subsection{Real-valued Representation}\label{ss:rvrep}
Three-phase quantities can be modeled as equivalent two-phase quantities (if zero-sequence components are disregarded). For example, in the $\alpha\beta$ frame (the same applies to the $dq$), three-phase electrical systems can be represented by 
\begin{equation}\label{eq:rmimo}
\left(\begin{array}{c} y_{\alpha}(\mathrm{p}) \\ y_{\beta}(\mathrm{p}) \end{array}\right)=\underbrace{\left(\begin{array}{cc} G_a(\mathrm{p}) & G_b(\mathrm{p}) \\ G_c(\mathrm{p}) & G_d(\mathrm{p}) \end{array}\right)}_{M_{\mathbb{R}}(\mathrm{p})}\left(\begin{array}{c} u_{\alpha}(\mathrm{p}) \\ u_{\beta}(\mathrm{p}) \end{array}\right)
\end{equation}
where the system matrix $M_{\mathbb{R}}(\mathrm{p})$ elements are R-TFs denoted by $G_a(\mathrm{p}),G_b(\mathrm{p}), G_c(\mathrm{p}),G_d(\mathrm{p})\in  \mathcal{F}_{0,0}$, and $u=\left(\begin{array}{cc}u_{\alpha}(\mathrm{p}) &  u_{\beta}(\mathrm{p})\end{array}\right)^T\in \mathcal{F}_{0,0}^{2 \times 1}$ and $y=\left(\begin{array}{cc}y_{\alpha}(\mathrm{p}) &  y_{\beta}(\mathrm{p})\end{array}\right)^T\in \mathcal{F}_{0,0}^{2 \times 1}$ are the input  and output vectors, respectively.  Eq.~(\ref{eq:rmimo}) will be referred to as real-valued MIMO model and corresponds to the plant in the closed-loop scheme of Fig.~\ref{fig:rmimo}.

For three-phase balanced systems, $M_{\mathbb{R}}$ in~(\ref{eq:rmimo}) has a specific structure with $G_d(\mathrm{p})=G_a(\mathrm{p})$ and $G_b(\mathrm{p})=G_c(\mathrm{p})=0$, that is,  it  is diagonal, which implies that the $\alpha\beta$ channels are decoupled.  Hence, a MIMO diagonal controller can be designed considering two independent loops and using linear SISO tools and the real-valued controller matrix in the closed-loop scheme of Fig.~\ref{fig:rmimo} will have $C_2(\mathrm{p})=C_3(\mathrm{p})=0$. For the unbalanced case, the controller design problem becomes more complex because system matrix $M_{\mathbb{R}}(\mathrm{p})$ in (\ref{eq:rmimo}) is fully populated,  thus requiring to apply MIMO design techniques.

\begin{example}[Real-valued MIMO model]\label{e:ex2}
Applying standard modeling techniques such as the  modified nodal analysis~\cite{Wed02}, the system matrix $M_{\mathbb{R}}(\mathrm{p})$ of the real-valued MIMO model (\ref{eq:rmimo}) of the system of Fig. \ref{fig:scheme3} can be obtained. For the balanced case,   the system matrix is given by
\begin{equation}\label{eq:rmimoexbalanced}\small
M_{\mathbb{R}}(\mathrm{p})=\left(\begin{array}{cc}  R(L\mathrm{p}+R)^{-1}& 0 \\ 0 & R(L\mathrm{p}+R)^{-1}  \end{array}\right)
\end{equation}
and for the unbalanced case by 
\begin{equation}\label{eq:rmimoexunbalanced}\small
M_{\mathbb{R}}(\mathrm{p})\!\!=\!\!\left(\!\!\begin{array}{cc} 3R(2R\!+\!L\mathrm{p}\!+\!L_u\mathrm{p})d(\mathrm{p})^{-1} \!\!&\!\! -\sqrt{3}R\mathrm{p}(L\!-\!L_u)d(\mathrm{p})^{-1} \\  -\sqrt{3}R\mathrm{p}(L\!-\!L_u)d(\mathrm{p})^{-1}\!\!&\!\! R(6R\!+\!5L\mathrm{p}\!+\!L_u\mathrm{p})d(\mathrm{p})^{-1} \end{array}\!\!\!\!\right)
\end{equation}
where 
\begin{equation}\label{eq:rmimoexunbalancedden}
d(\mathrm{p})=2(R+L\mathrm{p})(3R+L\mathrm{p}+2L_u\mathrm{p})
\end{equation}

\end{example}

\subsection{Complex-valued Representation}

The use of complex-valued dynamical models to represent  three-phase electrical systems has been widely used. The linear transformation $T_\mathbb{C}:\mathcal{F}_{0,0}^{2\times 1}\rightarrow \mathcal{F}_{0,1}^{2\times 1}$ defined by
\begin{equation}\label{eq:transrc}
\left(\!\begin{array}{c} x_{\alpha\beta}(\mathrm{p}) \\ \bar x_{\alpha\beta}(\mathrm{p}) \end{array}\!\right)=T_{\mathbb{C}}\left(\!\begin{array}{c} x_{\alpha}(\mathrm{p}) \\ x_{\beta}(\mathrm{p}) \end{array}\!\right)\; \text{with }\;
T_{\mathbb{C}}=\left(\begin{array}{cc} 1 & j \\ 1 & -j \end{array}\right)
\end{equation}
where   $x_{\alpha}(\mathrm{p}), x_{\beta}(\mathrm{p}) \in \mathcal{F}_{0,0}$ are R-TFs and $x_{\alpha\beta}(\mathrm{p})=x_{\alpha}(\mathrm{p})+jx_{\beta}(\mathrm{p}) \in \mathcal{F}_{0,1}$ is a C-TF, allows transferring real-valued MIMO systems~(\ref{eq:rmimo}) to the complex domain \cite{Und76}, leading to
\begin{equation}\label{eq:cmimo}
\left(\begin{array}{c} y_{\alpha\beta}(\mathrm{p}) \\ \bar y_{\alpha\beta}(\mathrm{p}) \end{array}\right)=\underbrace{\left(\begin{array}{cc} G_1(\mathrm{p}) & G_2(\mathrm{p}) \\ \bar G_2(\mathrm{p}) & \bar G_1(\mathrm{p}) \end{array}\right)}_{M_{\mathbb{C}}(\mathrm{p})}\left(\begin{array}{c} u_{\alpha\beta}(\mathrm{p})  \\ \bar u_{\alpha\beta}(\mathrm{p})  \end{array}\right)
\end{equation}
where the elements of the system matrix $M_{\mathbb{C}}(\mathrm{p})$ are C-TFs denoted by $G_1(\mathrm{p}), G_2(\mathrm{p})\in \mathcal{F}_{0,1}$ and given by
\begin{equation}
\begin{aligned}\label{eq:g1g2}
G_1(\mathrm{p})=&\frac{G_a(\mathrm{p})+G_d(\mathrm{p})}{2}+j\frac{-G_b(\mathrm{p})+G_c(\mathrm{p})}{2}  \\
G_2(\mathrm{p})=&\frac{G_a(\mathrm{p})-G_d(\mathrm{p})}{2}+j\frac{G_b(\mathrm{p})+G_c(\mathrm{p})}{2} 
\end{aligned}
\end{equation}
with   $u_{\alpha\beta}(\mathrm{p}), y_{\alpha\beta}(\mathrm{p})\in \mathcal{F}_{0,1}$. Henceforth, this representation will be referred to as complex-valued MIMO model. 

The obtained MIMO model (\ref{eq:cmimo})-(\ref{eq:g1g2}) exhibits a symmetry:  the  dynamics of the second output correspond to the first one in conjugate form.  As a consequence, its  analysis  simplifies to consider only the complex-valued SISO system given by
\begin{equation}\label{eq:asymsiso}
y_{\alpha\beta}(\mathrm{p})=G_1(\mathrm{p})u_{\alpha\beta}(\mathrm{p})+G_2(\mathrm{p})\bar u_{\alpha\beta}(\mathrm{p})
\end{equation}
which corresponds to the  closed-loop scheme of Fig.~\ref{fig:cmimo}, which is non-linear due to the fact that $G_2(\mathrm{p})$ is multiplied by the conjugate of the input, $\bar u_{\alpha\beta}(\mathrm{p})$. The order of the system, from (\ref{eq:rmimo}) to (\ref{eq:asymsiso}), has been reduced by half at the expenses of introducing the nonlinearity that challenges the design of the C-TF controller, $C_{\mathbb{C}}(\mathrm{p})$. Note that for balanced systems it holds that $G_2(\mathrm{p})=0$  and the complex-value SISO system (\ref{eq:asymsiso}) becomes linear, thus   allowing using extensions of  linear SISO techniques to   complex-valued models (e.g. \cite{Dor16, Bae19, Dor21}).

\begin{example}[Complex-valued MIMO and SISO]\label{e:ex3}
Under transformation (\ref{eq:transrc}), the real  MIMO model  given in (\ref{eq:rmimoexbalanced})-(\ref{eq:rmimoexunbalanced}) transforms to~(\ref{eq:cmimo})-(\ref{eq:g1g2}) where the system matrix $M_{\mathbb{C}}(\mathrm{p})$ is given by  
\begin{equation}\label{eq:cmimoexbalanced}\small
M_{\mathbb{C}}(\mathrm{p})=\left(\begin{array}{cc} R(L\mathrm{p}+R)^{-1}  & 0 \\ 0 & R(L\mathrm{p}+R)^{-1} \end{array}\right)
\end{equation}
for the balanced case, and for the unbalanced case by 
\begin{equation}\label{eq:cmimoexunbalanced}\small
M_{\mathbb{C}}(\mathrm{p})\!\!=\!\!\left(\!\!\begin{array}{cc} \!\! 2R(3R\!+\!2L\mathrm{p}\!+\!L_u\mathrm{p})d(\mathrm{p})^{-\!1}\! \! \! \! \!&\! \! \! \! \! g_{r}(\mathrm{p})\!+\!g_{i}(\mathrm{p})j\! \\\! g_{r}(\mathrm{p})\!+\!g_{i}(\mathrm{p})j\!  \!&\! \! 2R(3R\!+\!2L\mathrm{p}\!+\!L_u\mathrm{p})d(\mathrm{p})^{-\!1}\!\!\end{array}\!\!\!\!\right)
\end{equation}
where
\begin{equation} \small
g_{r}(\mathrm{p})\!=\!R\mathrm{p}(L\!-\!L_u)d(\mathrm{p})^{-1} \text{ and }
g_{i}(\mathrm{p})\!=\!-\sqrt{3}R\mathrm{p}(L\!-\!L_u)d(\mathrm{p})^{-1}
\end{equation}

with $d(\mathrm{p})$ given in (\ref{eq:rmimoexunbalancedden}). And thanks to  the symmetry, and according to (\ref{eq:asymsiso}), the control problem  for the balanced case only considers the linear complex-valued SISO system characterized by~(\ref{eq:cmimoexbalanced}) while for the unbalanced case it must consider the nonlinear complex-valued SISO system characterized by~(\ref{eq:cmimoexunbalanced}).
\end{example}

\section{Geometric Algebra Representation}\label{ss:geometric}

Three-phase electrical systems will be represented in a new mathematical domain characterized by a 4-dimensional geometric algebra (GA) denoted as $\mathcal{F}_{2,0}$ (further explained in the Appendix). This algebra is spanned by four basis elements: $e_0$, $e_1$, $e_2$, and $e_{12}$. Each element of $\mathcal{F}_{2,0}$ is a multivector that can be expressed as a linear combination of these basis elements, $e_0$, $e_1$, $e_2$, and $e_{12}$, with coefficients that are real-time functions (R-TFs). Specifically, if a multivector $V$ belongs to $\mathcal{F}_{2,0}$, it can be expressed as $V = a_0e_0 + a_1e_1 + a_2e_2 + a_{12}e_{12}$, where $a_0e_0$ is called the real part of $V$, $a_1e_1 + a_2e_2 + a_{12}e_{12}$ is called the vector part of $V$, and $e_{12}$ is called the pseudoscalar, which satisfies the property that it squares to $-1$.

The representation of the real-valued MIMO system~(\ref{eq:rmimo}) in the new GA domain is also obtained by applying a particular transformation. The transformation has been chosen to achieve a decoupled and symmetric representation both for balanced and unbalanced systems, which  brings the analysis to the GA-value linear SISO systems domain. 
Specifically, the transformation, namely $T_{\mathbb{G}}$,  should be able to transform the $2 \times 2$  system matrix $M_{\mathbb{R}}(\mathrm{p})$ characterizing  the real-valued MIMO model (\ref{eq:rmimo}) into an equivalent $2 \times 2$ system matrix having a diagonal form, and with equal diagonal elements (equal GA-TFs). The latter will ensure that only one of the two diagonal elements will be used for systems' analysis, which reduces the order to a $1 \times 1$  system matrix with a single GA-TF. Hence,  the desired transformation $T_{\mathbb{G}}:\mathcal{F}_{0,0}^2\rightarrow \left(\mathcal{F}_{2,0}\right)^2$ should fulfill
\begin{equation}\label{eq:eqtrans}
T_{\mathbb{G}}M_{\mathbb{R}}(\mathrm{p})T_{\mathbb{G}}^{-1}=M_{\mathbb{G}}(\mathrm{p})
\end{equation}
where $M_{\mathbb{G}}(\mathrm{p})$ will be the new system matrix characterizing the novel GA-valued model. And since $M_{\mathbb{G}}(\mathrm{p})$ is set to be diagonal with equal diagonal elements, solving (\ref{eq:eqtrans}) and finding $T_{\mathbb{G}}$ will ultimately imply simplifying the whole analysis, specifically for the unbalanced case, avoiding either linear MIMO~(\ref{eq:rmimo}) or non-linear SISO~(\ref{eq:asymsiso}) models. 

The original real-valued MIMO model  (\ref{eq:rmimo}) is transferred to the new  $\mathcal{F}_{2,0}$ GA space by applying the  linear transformation  $T_{\mathbb{G}}$
defined by
\begin{equation}\label{eq:transrg}
\left(\!\!\begin{array}{c} x_{g}(\mathrm{p}) \\ \underline x_{g}(\mathrm{p}) \end{array}\!\!\right)\!=\!T_{\mathbb{G}}\!\left(\!\!\begin{array}{c} x_{\alpha}(\mathrm{p}) \\ x_{\beta}(\mathrm{p}) \end{array}\!\!\right) \text{with } 
T_{\mathbb{G}}\!=\!\frac{1}{2}\!\left(\!\!\!\begin{array}{cc} e_0\!+\!e_1\!\!&\!\! -e_2\!+\!e_{12} \\ -e_2\!-\!e_{12} \!\!&\!\! e_0\!-\!e_1\end{array}\!\!\!\right)
\end{equation}
that solves (\ref{eq:eqtrans}), where 
$x_{\alpha}(\mathrm{p}), x_{\beta}(\mathrm{p})\in \mathcal{F}_{0,0}$ are R-TFs, $x_{g}(\mathrm{p})=\frac{1}{2}(x_{\alpha}(\mathrm{p})e_0+x_{\alpha}(\mathrm{p})e_1-x_{\beta}(\mathrm{p})e_2+x_{\beta}(\mathrm{p})e_{12}\in \mathcal{F}_{2,0}$ is a GA-TF, and $\underline x_{g}(\mathrm{p})\in \mathcal{F}_{2,0}$ stands for the dual of $x_{g}(\mathrm{p})$ and it is given by $\underline x_{g}(\mathrm{p})=x_{g}(\mathrm{p})e_{12}$~\cite{Hes84}. By applying~(\ref{eq:transrg}) to  (\ref{eq:rmimo}) leads to
\begin{equation}\label{eq:gmimo}
\left(\begin{array}{c} y_{g}(\mathrm{p}) \\ \underline y_{g}(\mathrm{p}) \end{array}\right)=\underbrace{\left(\begin{array}{cc} G_{\mathbb{G}}(\mathrm{p}) & 0 \\ 0 & G_{\mathbb{G}}(\mathrm{p}) \end{array}\right)}_{M_{\mathbb{G}}(\mathrm{p})}\left(\begin{array}{c} u_{g}(\mathrm{p}) \\ \underline u_{g}(\mathrm{p}) \end{array}\right)
\end{equation}
where $u_{g}(\mathrm{p}), y_{g}(\mathrm{p})\in \mathcal{F}_{2,0}$, and  the diagonal elements  $G_{\mathbb{G}}(\mathrm{p})\in \mathcal{F}_{2,0}$ of the system matrix $M_{\mathbb{G}}(\mathrm{p})$ are the GA-TFs 
\begin{equation}\label{eq:gs3}
\begin{aligned}
G_{\mathbb{G}}(\mathrm{p})=&\frac{1}{2}\left(G_a(\mathrm{p})+G_d(\mathrm{p})e_0 +G_a(\mathrm{p})-G_d(\mathrm{p})e_1\right.\\
&+\left.G_b(\mathrm{p})+G_c(\mathrm{p})e_2+G_b(\mathrm{p})-G_c(\mathrm{p})e_{12} \right)
\end{aligned}
\end{equation}
and the zeros of the contra-diagonal stand for $0=0e_0+0e_1+0e_2+0e_{12}\in \mathcal{F}_{2,0}$. Henceforth, this representation will be referred to as GA-valued MIMO model.  

The GA MIMO model~(\ref{eq:gmimo}) exhibits a decoupled structure  ($M_{\mathbb{G}}(\mathrm{p})$ is diagonal), and since both diagonal elements are equal, its  analysis simplifies to consider only the  GA-valued linear SISO system 
\begin{equation}\label{eq:gsiso}
y_{g}(\mathrm{p})=G_{\mathbb{G}}(\mathrm{p})u_{g}(\mathrm{p})
\end{equation}
which is the plant of  Fig.~\ref{fig:gmimo}. From  $M_{\mathbb{G}}(\mathrm{p})$ in~(\ref{eq:gmimo}), it holds 
\begin{equation}\label{eq:selS}
G_{\mathbb{G}}(\mathrm{p})=S^TM_{\mathbb{G}}(\mathrm{p})S, \text{ with } S=\left(\begin{array}{c} 1 \\ 0 \end{array}\right)
\end{equation}

\begin{example}[GA-valued MIMO and SISO]\label{e:ex4}
Under transformation (\ref{eq:transrg}), the real-valued MIMO model   (\ref{eq:rmimoexbalanced})-(\ref{eq:rmimoexunbalanced}) transforms to~(\ref{eq:gmimo})  where the system matrix $M_{\mathbb{G}}(\mathrm{p})$ is defined by 
\begin{equation}\label{eq:gmimoexbalanced}\small
M_{\mathbb{G}}(\mathrm{p})=\left(\begin{array}{cc} R(L\mathrm{p}+R)^{-1}e_0 & 0 \\ 0 & R(L\mathrm{p}+R)^{-1}e_0 \end{array}\right)
\end{equation}
for the balanced case, and for the unbalanced case by 
\begin{equation}\label{eq:gmimoexunbalanced}\small
M_{\mathbb{G}}(\mathrm{p})\!\!=\!\!\left(\!\!\begin{array}{cc} \!g_0(\mathrm{p})e_0\!+\!g_1(\mathrm{p})e_1\!\!+\!g_2(\mathrm{p})e_2 \! \!\!\!&\!\!\! \!0 \! \! \\ \! \! 0 \!\!\!\!&\!\!\!\! g_0(\mathrm{p})e_0\!+\!g_1(\mathrm{p})e_1\!\!+\!g_2(\mathrm{p})e_2 \! \!\end{array}\!\!\!\right)
\end{equation}
where
\begin{equation}
\begin{aligned}
g_0(\mathrm{p})&=2R(3R+2L\mathrm{p}+L_u\mathrm{p})d(\mathrm{p})^{-1}\\
g_1(\mathrm{p})&=-R(L-L_u)\mathrm{p}d(\mathrm{p})^{-1} \\ 
g_2(\mathrm{p})&=\sqrt{3}R(L_u-L)\mathrm{p}d(\mathrm{p})^{-1}
\end{aligned}
\end{equation}

with $d(\mathrm{p})$ given in (\ref{eq:rmimoexunbalancedden}). Hence, the balanced case must consider a GA-valued SISO system (\ref{eq:gsiso}) characterized by (\ref{eq:gmimoexbalanced}) while the unbalanced case must also consider a GA-valued SISO system~(\ref{eq:gsiso}) but characterized by (\ref{eq:gmimoexunbalanced}).

\end{example}

\section{Discussion}\label{ss:discursion}

Now that the different elements of a control loop have been established, we can analyze the key differences between the RV, CV, and GAV models for balanced and unbalanced systems. This comparison focuses on their computational characteristics, input-output structures, and the nature of the operations required to compute closed-loop expressions. Thus, providing a concise overview of the advantages and limitations of each approach.

In balanced systems, the complex-valued (CV) model offers clear advantages due to its scalar nature for both inputs and outputs, and the commutativity of the operations required to compute closed-loop expressions, which simplifies the overall process. However, as systems become unbalanced, the CV model becomes impractical due to the nonlinearities and the loss of commutativity in these operations. Conversely, the geometric algebra (GA) model maintains scalar inputs and outputs, significantly simplifying controller design. While the operations required to compute closed-loop expressions in the GA model are inherently non-commutative, no existing framework achieves commutativity under these conditions, making GA a robust and viable alternative.

The proposed GA framework facilitates straightforward computation of controllers. However, its novelty means that standard tools for performing common tasks in controller design are not yet widely available. The next section addresses this limitation by adapting basic design tools from real-valued (RV) systems to the GA framework. It includes an exploration of stability analysis in GA-based systems and demonstrates, through an example, how traditional RV design methods can be translated to GA. Furthermore, it considers the practical challenges of implementing these controllers in real systems, showcasing a design example using these adapted techniques.

\section{Migrating Analisis and design tools from RV into GA}\label{ss:analisisanddesign}
The essential work of an engineer relies on a set of tools that are not yet fully available in the GA framework. Among these, stability analysis (both for open-loop and closed-loop systems) is indispensable and is addressed in the first subsection. On the other hand, design tools play a critical role in enabling the synthesis of controllers that meet specific performance criteria. While the vast array of existing tools requires a gradual migration to this framework, this section provides an example by adapting the Youla--Kučera \cite{You76} parametrization to illustrate the simplicity and potential of controller design in GA. This serves as a starting point, as other design techniques can also be migrated with relative ease.

\subsection{GA Stability Analysis}\label{ss:stability}
The system analysis and controller design problem is reduced to a linear GA-valued SISO model~(\ref{eq:gsiso}) 
characterized by the   plant and controller GA-TFs, $G_{\mathbb{G}}(\mathrm{p}), C_{\mathbb{G}}(\mathrm{p})\in \mathcal{F}_{2,0}$,  whose generic expressions can be written as
\begin{equation}\label{eq:gplantandcontroller}
\begin{aligned}
G_{\mathbb{G}}(\mathrm{p})&=g_0(\mathrm{p})e_0+g_1(\mathrm{p})e_1+g_2(\mathrm{p})e_2+g_3(\mathrm{p})e_{12}\\
C_{\mathbb{G}}(\mathrm{p})&=c_0(\mathrm{p})e_0+c_1(\mathrm{p})e_1+c_2(\mathrm{p})e_2+c_3(\mathrm{p})e_{12}
\end{aligned}
\end{equation}

To analyze closed-loop stability,  the closed-loop GA-TF  corresponding to Fig.~\ref{fig:gmimo} is 
\begin{equation}\label{eq:closedloop}
G_{\mathbb{G}}^{cl}(\mathrm{p})=G_{\mathbb{G}}(\mathrm{p})C_{ \mathbb{G}}(\mathrm{p})\left(e_0+G_{\mathbb{G}}(\mathrm{p})C_{ \mathbb{G}}(\mathrm{p})\right)^{-1}
\end{equation}
Assume that the plant and controller (\ref{eq:gplantandcontroller}) are expressed in a numerator/denominator structure as follows
\begin{equation}\label{eq:gplantcontroller}
G_{\mathbb{G}}(\mathrm{p})=\frac{n_p(\mathrm{p})}{d_p(\mathrm{p})} \quad \text{and} \quad C_{\mathbb{G}}(\mathrm{p})=\frac{n_c(\mathrm{p})}{d_c(\mathrm{p})}
\end{equation}
where $n_{p}(\mathrm{p}),d_{p}(\mathrm{p}),n_{c}(\mathrm{p}),d_{c}(\mathrm{p})\in \mathcal{F}_{2,0}$, and specifically, where the denominators $d_p(\mathrm{p})$ and $d_c(\mathrm{p})$ have only scalar part. Note that it is always possible to manipulate the quotients given in (\ref{eq:gplantcontroller}) is such a way  that they can be expressed in terms of  denominators having only scalar part, thus defined by real-valued polynomial functions, $d_p(\mathrm{p}),d_c(\mathrm{p})\in\mathbb{F}\equiv\mathcal{F}_{0,0}$.

\begin{proposition}\label{p:stability}
The closed-loop scheme shown in Fig.\ref{fig:gmimo}, whose GA-TF is given in (\ref{eq:closedloop}), and where the controller and  plant  are denoted by $C_{ \mathbb{G}}(\mathrm{p}), G_{ \mathbb{G}}(\mathrm{p})\in  \mathcal{F}_{2,0}$, and decomposed as in (\ref{eq:gplantcontroller}), is asymptotically stable if the  roots of a real-valued polynomial function $d_{cl}(\mathrm{p})\in \mathcal{F}_{0,0}$ given by
\begin{equation}\label{eq:stability}
d_{cl}(\mathrm{p})=\overline{ d_{pc}(\mathrm{p})}d_{pc}(\mathrm{p})
\end{equation}
with $d_{pc}(\mathrm{p})=d_p(\mathrm{p})d_c(\mathrm{p})+n_p(\mathrm{p})n_c(\mathrm{p})\in  \mathcal{F}_{2,0}$, have negative real part.
\end{proposition}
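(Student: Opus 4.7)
The idea is to push the closed-loop expression \eqref{eq:closedloop} through the factorizations in \eqref{eq:gplantcontroller}, exploit that $d_p(\mathrm{p}),d_c(\mathrm{p})\in\mathcal{F}_{0,0}$ commute with every multivector, and then rationalize the remaining multivector denominator with the GA conjugate, so that the closed-loop denominator becomes a genuine real polynomial whose roots are the closed-loop modes.

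First, because $d_p,d_c$ are scalar, the loop gain simplifies to $G_{\mathbb{G}}(\mathrm{p})C_{\mathbb{G}}(\mathrm{p}) = n_p(\mathrm{p}) n_c(\mathrm{p})/(d_p(\mathrm{p}) d_c(\mathrm{p}))$, whence $e_0 + G_{\mathbb{G}}(\mathrm{p})C_{\mathbb{G}}(\mathrm{p}) = d_{pc}(\mathrm{p})/(d_p(\mathrm{p}) d_c(\mathrm{p}))$. Plugging into \eqref{eq:closedloop} and cancelling the scalar $d_p d_c$ yields $G_{\mathbb{G}}^{cl}(\mathrm{p}) = n_p(\mathrm{p}) n_c(\mathrm{p})\, d_{pc}(\mathrm{p})^{-1}$. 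The multivector inverse is then rationalized in the standard way, $d_{pc}^{-1} = \overline{d_{pc}}/(\overline{d_{pc}}\, d_{pc})$, giving
\begin{equation*}
G_{\mathbb{G}}^{cl}(\mathrm{p}) = \frac{n_p(\mathrm{p})\, n_c(\mathrm{p})\, \overline{d_{pc}(\mathrm{p})}}{\overline{d_{pc}(\mathrm{p})}\, d_{pc}(\mathrm{p})}.
\end{equation*}

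The essential algebraic check is that the denominator $\overline{d_{pc}}\, d_{pc}$ is a pure scalar. For an arbitrary $X = a_0 e_0 + a_1 e_1 + a_2 e_2 + a_{12} e_{12} \in \mathcal{F}_{2,0}$, write $X = a_0 e_0 + V$ with $V = a_1 e_1 + a_2 e_2 + a_{12} e_{12}$; then $\overline{X} = a_0 e_0 - V$, and since $a_0 e_0$ commutes with $V$ one gets $\overline{X}X = a_0^2 - V^2$. In $V^2$ the cross terms cancel by anti-commutation ($e_i e_j + e_j e_i = 0$ for $i \neq j$), while the diagonal squares $e_1^2 = e_2^2 = 1$ and $e_{12}^2 = -1$ survive, so $V^2 = a_1^2 + a_2^2 - a_{12}^2$ is scalar and $\overline{X}X \in \mathcal{F}_{0,0}$. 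Setting $X = d_{pc}(\mathrm{p})$ recovers \eqref{eq:stability}. Consequently, each component of $G_{\mathbb{G}}^{cl}(\mathrm{p})$ along $\{e_0, e_1, e_2, e_{12}\}$ is a real-rational function with denominator dividing $d_{cl}(\mathrm{p})$, so every closed-loop mode is a root of $d_{cl}(\mathrm{p})$ and the standard linear-systems criterion gives asymptotic stability when those roots lie in the open left half-plane.

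The main obstacle I anticipate is the scalar-part identity $\overline{X}X \in \mathcal{F}_{0,0}$: this is where the specific signature of $\mathcal{F}_{2,0}$ actually enters, and without it the closed-loop denominator would remain a multivector with no clean notion of poles. Once that identity is in hand, the rest is routine SISO rational-function bookkeeping, with the only added care being never to commute two non-scalar multivectors.
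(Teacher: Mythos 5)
Your proposal is correct and follows essentially the same route as the paper's proof: simplify the closed-loop expression using the fact that $d_p(\mathrm{p}),d_c(\mathrm{p})$ are scalar, rationalize $d_{pc}(\mathrm{p})^{-1}$ as $\overline{d_{pc}(\mathrm{p})}\big(\overline{d_{pc}(\mathrm{p})}\,d_{pc}(\mathrm{p})\big)^{-1}$, and verify that $\overline{d_{pc}(\mathrm{p})}\,d_{pc}(\mathrm{p})=(a^2-b^2-c^2+d^2)e_0$ is a real-valued polynomial whose roots govern stability. Your explicit check of the cross-term cancellation via anti-commutation is a slightly more detailed justification of the scalar-part identity that the paper simply asserts (citing Hestenes--Sobczyk), but the argument is the same.
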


\begin{proof}
By using (\ref{eq:gplantcontroller}) and a few algebraic operations, the closed-loop transfer function (\ref{eq:closedloop}) can be further written as 
\begin{equation}\label{eq:proof2}
\begin{aligned}
G_{\mathbb{G}}^{cl}(\mathrm{p})\!=&\frac{n_p(\mathrm{p})}{d_p(\mathrm{p})}\frac{n_c(\mathrm{p})}{d_c(\mathrm{p})}\left(e_0+\frac{n_p(\mathrm{p})}{d_p(\mathrm{p})}\frac{n_c(\mathrm{p})}{d_c(\mathrm{p})}\right)^{-1}\\
=&\frac{n_p(\mathrm{p})}{d_p(\mathrm{p})}\frac{n_c(\mathrm{p})}{d_c(\mathrm{p})}\left(\frac{d_p(\mathrm{p})d_c(\mathrm{p})+n_p(\mathrm{p})n_c(\mathrm{p})}{d_p(\mathrm{p})d_c(\mathrm{p})}\right)^{-1}\\
=&\frac{n_p(\mathrm{p})}{d_p(\mathrm{p})}\frac{n_c(\mathrm{p})}{d_c(\mathrm{p})}\left(\!\frac{d_{pc}(\mathrm{p})}{d_p(\mathrm{p})d_c(\mathrm{p})}\!\right)^{-1}\!\!\!=\!n_p(\mathrm{p})n_c(\mathrm{p})\left(d_{pc}(\mathrm{p})\right)^{-1}\\
=&n_p(\mathrm{p})n_c(\mathrm{p})\frac{\overline{d_{pc}(\mathrm{p})}}{\overline{d_{pc}(\mathrm{p})}d_{pc}(\mathrm{p})}=\frac{n_p(\mathrm{p})n_c(\mathrm{p})\overline{d_{pc}(\mathrm{p})}}{\overline{d_{pc}(\mathrm{p})}d_{pc}(\mathrm{p})}\\
\end{aligned}
\end{equation}

 Assume that $d_{pc}(\mathrm{p})$ in (\ref{eq:proof2}) is generally written as
$d_{pc}(\mathrm{p})=a(\mathrm{p})e_0+b(\mathrm{p})e_1+c(\mathrm{p})e_2+d(\mathrm{p})e_{12}\in \mathcal{F}_{2,0}$, with $a(\mathrm{p}), b(\mathrm{p}), c(\mathrm{p}), d(\mathrm{p}) \in \mathcal{F}_{0,0}$. Since its geometric conjugate is given by $\overline{d_{pc}(\mathrm{p})}=a(\mathrm{p})e_0-b(\mathrm{p})e_1-c(\mathrm{p})e_2-d(\mathrm{p})e_{12}$~\cite{Hes84}, it follows that the denominator of  (\ref{eq:proof2}) is
\begin{equation}
d_{cl}(\mathrm{p})\!=\!\overline{d_{pc}(\mathrm{p})}d_{pc}(\mathrm{p})\!=\!(a(\mathrm{p})^2\!-\!b(\mathrm{p})^2\!-\!c(\mathrm{p})^2\!+\!d(\mathrm{p})^2)e_0
\end{equation}
is a real-valued polynomial function, $d_{cl}(\mathrm{p})\in \mathcal{F}_{0,0}$,  and its roots will determine closed-loop system stability.
\end{proof}

Hence, GA-valued closed-loop stability is analyzed using the same tools used in real-valued SISO systems: by studying the  roots of a real-valued polynomial denominator. Hence, well known tools like the  Routh-Hurwitz stability criterion for parametric settings or robust stability tests for dealing with uncertainties can be directly applied.


\begin{example}[Stability analysis with a proportional controller]\label{e:ex5}
Assume the following GA-TF-based proportional controller 
\begin{equation}\label{eq:pc1}
C_{\mathbb{G}}(\mathrm{p})=k(e_0+e_1), \quad k\in \mathbb{R}
\end{equation}
for the unbalanced geometric representation of the plant (GA-valued SISO system (\ref{eq:gsiso}) characterized by (\ref{eq:gmimoexunbalanced})) in closed-loop form as in Fig.~\ref{fig:gmimo}. The GA-valued closed-loop transfer function (\ref{eq:closedloop}) is given by 
\begin{equation}\label{eq:closedloopexs}
\begin{aligned}
G_{\mathbb{G}}^{cl}(\mathrm{p})&=( 3Rk(2R + (L+ Lu)\mathrm{p}) )d_{cl}(\mathrm{p})^{-1}(e_{0}+e_{1})\\
&+\sqrt{3}Rk(L - Lu)\mathrm{p}d_{cl}(\mathrm{p})^{-1}(e_2-e_{12})
\end{aligned}
\end{equation}
where 
\begin{equation}\label{eq:denclosedloopexs}
 \begin{aligned}
d_{cl}(\mathrm{p})=&2(L^2+2L_uL)\mathrm{p}^2+ 6R^2 + 12R^2k\\
&+4\left(2LR + LuR+ 2LRk + L_uRk\right)\mathrm{p}
\end{aligned}
\end{equation}
The roots of  real-valued polynomial (\ref{eq:denclosedloopexs}) with the numerical values for the components given in Example~\ref{e:ex1} and
$k=10$ are $s_1=-9.04\cdot 10^{4}$ and $s_2=-0.17\cdot 10^{4}$.
Hence, according to Proposition~\ref{p:stability}, the GA closed-loop system is stable. 
In fact, 
computing the location for the slowest  pole of the closed-loop system (\ref{eq:closedloopexs}) as a function of a wider range of values for the controller gain $k=10^{-6}\cdot10^{i}$, $i=0,1..12$. For any value, stability is guaranteed (the real part of the pole is always negative), and for small or high values of $k$, the slowest pole value collapses around $-1.04\cdot 10^3$.
\end{example}
%

\subsection{Migrating Youla--Kučera parametrization from RV to GA}\label{ss:youla}



The  $Q$ parameterization (or Youla--Kučera parameterization~\cite{You76}) of all stabilizing controllers is extended to the new GA domain. In the real-valued domain, the $Q$ parametrization states that for linear  closed-loop MIMO systems, where the real-valued plant $G(\mathrm{p})$ is a stable R-TF matrix, the family of all real-valued stabilizing  controllers is given by 
\begin{equation}\label{eq:youla}
C(\mathrm{p})=\left(I-Q(\mathrm{p})G(\mathrm{p})\right)^{-1}Q(\mathrm{p})
\end{equation}
where the parameter $Q(\mathrm{p})$ is a stable and proper R-TF matrix. 

\begin{proposition}\label{p:youla}
Assume that for a real-valued linear  closed-loop MIMO system, the family of all real-valued stabilizing negative feedback controllers $C(\mathrm{p})$ is given by (\ref{eq:youla}), where the plant $G(\mathrm{p})$ is stable, and the parameter $Q(\mathrm{p})$ is stable and proper. Then, for the GA-valued closed-loop scheme shown in Fig.~\ref{fig:gmimo}, the family of all GA-valued stabilizing negative feedback controllers $C_{\mathbb{G}}(\mathrm{p})$ is given by the GA-TF
\begin{equation}\label{eq:gayoula}
C_{\mathbb{G}}(\mathrm{p})=\left(e_0-Q_{\mathbb{G}}(\mathrm{p})G_{\mathbb{G}}(\mathrm{p})\right)^{-1}Q_{\mathbb{G}}(\mathrm{p})
\end{equation}
where $C_{\mathbb{G}}(\mathrm{p})=S^TT_{\mathbb{G}}C(\mathrm{p})T_{\mathbb{G}}S$, 
$G_{\mathbb{G}}(\mathrm{p})=S^TT_{\mathbb{G}}G(\mathrm{p})T_{\mathbb{G}}S$, and
$Q_{\mathbb{G}}(\mathrm{p})=S^TT_{\mathbb{G}}Q(\mathrm{p})T_{\mathbb{G}}S$, with $T_{\mathbb{G}}$  in (\ref{eq:transrg}) and $S$ in~(\ref{eq:selS}).
\end{proposition}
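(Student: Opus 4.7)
The plan is to derive (\ref{eq:gayoula}) by pushing the real-valued Youla--Kučera identity (\ref{eq:youla}) through the similarity induced by $T_{\mathbb{G}}$ and then extracting the scalar GA-representative via the selector $S$ of (\ref{eq:selS}). First I would conjugate both sides of (\ref{eq:youla}) by $T_{\mathbb{G}}$, i.e.\ apply the map $A\mapsto T_{\mathbb{G}}A T_{\mathbb{G}}^{-1}$. Because $T_{\mathbb{G}}$ has entries that are constant in $\mathrm{p}$, this map is a unital ring homomorphism on $2\times 2$ matrices of transfer functions: it distributes across matrix products, commutes with matrix inversion, and sends the real identity $I$ to the GA identity $\operatorname{diag}(e_0,e_0)$. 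Applying it to (\ref{eq:youla}) yields
\begin{equation*}
T_{\mathbb{G}}C T_{\mathbb{G}}^{-1} \;=\; \bigl(\operatorname{diag}(e_0,e_0) - (T_{\mathbb{G}}Q T_{\mathbb{G}}^{-1})(T_{\mathbb{G}}G T_{\mathbb{G}}^{-1})\bigr)^{-1} T_{\mathbb{G}}Q T_{\mathbb{G}}^{-1},
\end{equation*}
which is the $2\times 2$ GA-matrix form of the Youla parametrization.

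Second, I would invoke the defining property of $T_{\mathbb{G}}$ established around (\ref{eq:eqtrans}): for any real $2\times 2$ transfer matrix $A$, the conjugate $T_{\mathbb{G}}A T_{\mathbb{G}}^{-1}$ is a diagonal GA-matrix with equal diagonal entries. Applying $S^T(\cdot)S$ then picks out this common diagonal element. Since $S^T\operatorname{diag}(a,a)S=a$, and since $\operatorname{diag}(a,a)\operatorname{diag}(b,b)=\operatorname{diag}(ab,ab)$ together with $\operatorname{diag}(a,a)^{-1}=\operatorname{diag}(a^{-1},a^{-1})$, the selector commutes with product and inversion on this structural subclass. Applying it to both sides of the displayed identity therefore produces exactly (\ref{eq:gayoula}), provided one identifies $C_{\mathbb{G}}$, $G_{\mathbb{G}}$ and $Q_{\mathbb{G}}$ with the scalar extractions of their respective conjugates.

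The third ingredient is the stability claim. Because $T_{\mathbb{G}}$ is a constant invertible change of basis, the characteristic polynomial of the real-valued closed loop under $C,G$ is preserved under conjugation, and a straightforward comparison shows it coincides with $d_{cl}(\mathrm{p})=\overline{d_{pc}}\,d_{pc}$ of Proposition~\ref{p:stability}. Hence asymptotic stability of the real-valued Youla--Kučera loop is equivalent to the GA stability condition of Proposition~\ref{p:stability}, and the surjectivity of the real parametrization over all stabilizing real $C$ translates directly, via the similarity, into surjectivity of (\ref{eq:gayoula}) over all stabilizing $C_{\mathbb{G}}$.

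\textbf{Main obstacle.} The delicate point is not the algebraic juggling but justifying the structural claim used in the collapse step: that $A\mapsto T_{\mathbb{G}}A T_{\mathbb{G}}^{-1}$ sends \emph{every} real $2\times 2$ transfer matrix into the diagonal-with-equal-entries subspace of $\mathcal{F}_{2,0}^{2\times 2}$. In Section~\ref{ss:geometric} this is only argued for the plant matrix $M_{\mathbb{R}}$; to legitimately apply the selector $S^T(\cdot)S$ to an arbitrary Youla parameter $Q$ and to the resulting controller $C$, one needs a standalone lemma, proved by a direct block computation from the explicit form (\ref{eq:transrg}), showing that $A\mapsto S^TT_{\mathbb{G}}AT_{\mathbb{G}}^{-1}S$ is a faithful ring embedding of the real $2\times 2$ R-TF algebra into $\mathcal{F}_{2,0}$. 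Once this structural lemma is in place, the rest of the proof is a routine composition of ring-homomorphism properties with the selector, and Proposition~\ref{p:youla} follows mechanically.
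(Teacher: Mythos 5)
Your proposal is correct and follows essentially the same route as the paper's proof: conjugate the real-valued identity (\ref{eq:youla}) by the involutive transformation $T_{\mathbb{G}}$ (your ring-homomorphism framing is exactly what the paper implements by inserting $T_{\mathbb{G}}T_{\mathbb{G}}=I$), then extract the scalar GA representative with the selector $S$ (your diagonal-subclass multiplicativity is exactly what licenses the paper's insertion of $SS^T$ in (\ref{eq:youla21})). The structural lemma you flag as the main obstacle is in fact already available from Section~\ref{ss:geometric}, since the matrix $M_{\mathbb{R}}(\mathrm{p})$ in (\ref{eq:eqtrans})--(\ref{eq:gs3}) is a fully generic $2\times 2$ R-TF matrix, so the diagonal-with-equal-entries property applies verbatim to $Q(\mathrm{p})$, $C(\mathrm{p})$ and the inverted factor, not only to the plant.
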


\begin{proof}
By applying the transformation (\ref{eq:transrg}) to (\ref{eq:youla}), the following expression is obtained
\begin{equation}\label{eq:youla1}
T_{\mathbb{G}}^{-1}C(\mathrm{p})T_{\mathbb{G}}\!=\!T_{\mathbb{G}}^{-1}\left(I-Q(\mathrm{p})G(\mathrm{p})\right)^{-1}Q(\mathrm{p})T_{\mathbb{G}}
\end{equation}
By using the property that the transformation $T_{\mathbb{G}}$~(\ref{eq:transrg})  is involutive, i.e., $T_{\mathbb{G}}=T_{\mathbb{G}}^{-1}$, expression (\ref{eq:youla1}) is  written as 
\begin{equation}\label{eq:youla2}
T_{\mathbb{G}}C(\mathrm{p})T_{\mathbb{G}}=T_{\mathbb{G}}\left(I-Q(\mathrm{p})T_{\mathbb{G}}T_{\mathbb{G}}G(\mathrm{p})\right)^{-1}T_{\mathbb{G}}T_{\mathbb{G}}Q(\mathrm{p})T_{\mathbb{G}}
\end{equation}
and  by selecting the first component using $S$ (\ref{eq:selS}) leads to
\begin{equation}\label{eq:youla21}
\begin{aligned}
&S^TT_{\mathbb{G}}C(\mathrm{p})T_{\mathbb{G}}S=\\
&S^TT_{\mathbb{G}}\left(I-Q(\mathrm{p})T_{\mathbb{G}}T_{\mathbb{G}}G(\mathrm{p})\right)^{-1}T_{\mathbb{G}}SS^TT_{\mathbb{G}}Q(\mathrm{p})T_{\mathbb{G}}S
\end{aligned}
\end{equation}
that rearranged is given by
\begin{equation}\label{eq:youla3}
\begin{aligned}
&\underbrace{S^TT_{\mathbb{G}}C(\mathrm{p})T_{\mathbb{G}}S}_{C_{\mathbb{G}}(\mathrm{p})}=\\
&(e_0-\underbrace{S^TT_{\mathbb{G}}Q(\mathrm{p})T_{\mathbb{G}}S}_{Q_\mathbb{G}}\underbrace{S^TT_{\mathbb{G}}G(\mathrm{p})T_{\mathbb{G}}S}_{G_{\mathbb{G}}})^{-1}\underbrace{TS^T_{\mathbb{G}}Q(\mathrm{p})T_{\mathbb{G}}S}_{Q_\mathbb{G}}
\end{aligned}
\end{equation}
which leads to (\ref{eq:gayoula}).
\end{proof}

Hence, in the GA domain, Proposition \ref{p:youla} indicates that for a given (open-loop stable) plant, the family of all stabilizing controllers can be computed using the same procedure used in the real domain, requiring the $Q$ parameter

\begin{equation}\label{eq:qplant}
Q_{\mathbb{G}}(\mathrm{p})=q_0(\mathrm{p})e_0+q_1(\mathrm{p})e_1+q_2(\mathrm{p})e_2+q_3(\mathrm{p})e_{12}
\end{equation}
to be stable.

\begin{example}[Decoupling GA-valued controller]\label{e:ex9} 
The  GA-TF controller design of the unbalanced scenario is now considered to target the decoupling effect. The design conditions for $Q_{\mathbb{G}}(\mathrm{p})$  that decouples the real-valued MIMO dynamics can be solved by inspection. Setting $q_0(\mathrm{p})$ as a free parameter, a decoupling controller may be found using the parameter whose coefficients are $q_1(\mathrm{p})\!=\!-g_1(\mathrm{p})q_0(\mathrm{p})g_0(\mathrm{p})^{-1}$,  $q_2(\mathrm{p})\!=\!-g_2(\mathrm{p})q_0(\mathrm{p})g_0(\mathrm{p})^{-1}$ and 
$q_3(\mathrm{p})\!=\!-g_3(\mathrm{p})q_0(\mathrm{p})g_0(\mathrm{p})^{-1}$, where $g_i(\mathrm{p})$ are the system's model coefficients.
Using  $q_0(\mathrm{p})=-1$, leads (\ref{eq:qplant}) to
\begin{equation}\label{eq:freeQ2}
\begin{aligned}
Q_{\mathbb{G}}(\mathrm{p})\!=&1e_0\!+\!(L\!-\!L_u)\mathrm{p}\left(e_1\!-\!\sqrt{3}e_2\right)d_q(\mathrm{p})^{-1}
\end{aligned}
\end{equation}
with $d_q(\mathrm{p})=(4L+2L_u)\mathrm{p}+6R$, which is stable and proper. By considering (\ref{eq:freeQ2}) , the parametrization (\ref{eq:gayoula}) leads to the GA controller 
\begin{equation}\label{eq:decontroller_sym}
C_{\mathbb{G}}(\mathrm{p})=\!\left(\!\left( 3R\!+\!L\right)\!e_0\!+\!(L\!-\!L_u)\mathrm{p}\!\left(\!e_0\!+\!e_1\!-\!\sqrt{3}e_2\!\right)\!\right)\!d_c(\mathrm{p})^{-1}
\end{equation}
with $d_c(\mathrm{p})=\left(2L + L_u\right)\mathrm{p} + 3R$. And replacing the system's values leads to
\begin{equation}\label{eq:decontroller}
C_{\mathbb{G}}(\mathrm{p})=-\left(1833\mathrm{p}^{-1}-1\right)e_0+0.375e_1-0.649e_2
\end{equation}
Fig.~\ref{fig:decoupling} displays two plots: one for the \(\alpha\) component and another for the \(\beta\) component of the input and output, respectively. The blue markers represent the desired setpoint at each time instant, while the red markers indicate the corresponding output values.
Observe that the set-point change introduced at  $t=0.05$ s only for the $\beta$ channel does not affect the $\alpha$ channel. The closed loop dynamics is driven by the expression
\begin{equation}\label{eq:close_dloop_dynamics}
    G^{cl}_{\mathbb{G}}=3R\left((2L + L_u)\mathrm{p} + 3R\right)^{-1}e_0
\end{equation}
which contains only the $e_0$  component, indicating that it has been completely decoupled. In fact, it is easy to double-check that the equivalent real-valued MIMO closed-loop transfer matrix is diagonal. Furthermore, due to the specific control methodology employed, the closed-loop response is as fast as the open-loop system dynamics. As a result, transients are extremely short and practically imperceptible in the experimental results.

\begin{figure}[t]
\centering
\includegraphics[width=6.5cm]{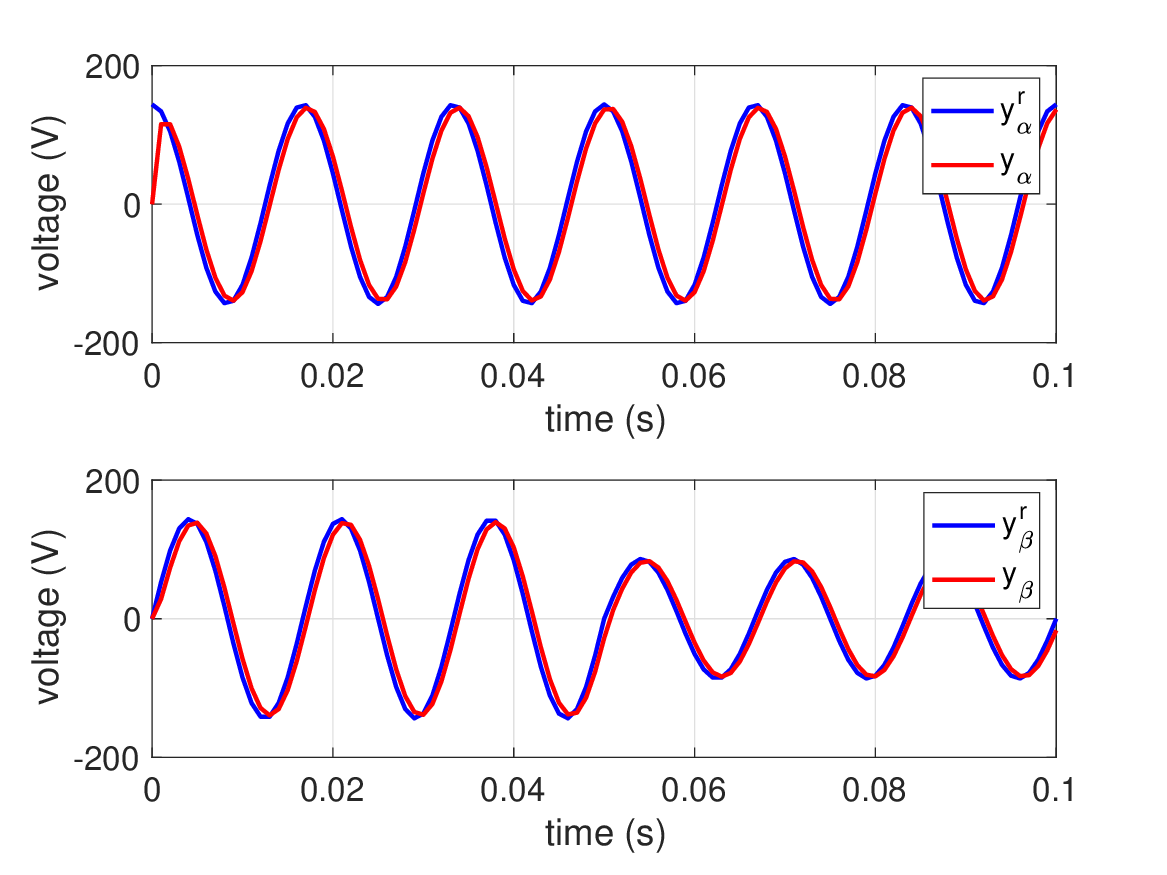}
\caption{Simulation results: Decoupling tracking controller (\(\alpha\) and \(\beta\) channels). The system model corresponds to the one depicted in Fig.~\ref{fig:scheme3}, using eq.~(\ref{eq:gmimoexunbalanced}). The controller is described by eq.~(\ref{eq:decontroller}). The inputs and outputs are transformed here into the \(\alpha\) and \(\beta\) channels.}
\label{fig:decoupling}
\end{figure}
\end{example}

\section{Experiments}\label{ss:res}

We have introduced a novel framework for three-phase systems representation and design that facilitates the handling of unbalanced systems with ease. This approach simplifies the complexity inherent in such systems and provides a structured methodology for their analysis and control. To further validate the practicality of this framework, we now proceed to deploy the proposed controller in a real system.

The three phase scheme shown in Fig.~\ref{fig:scheme3} has been reproduced in the laboratory to test the proposed controller (see Fig.~\ref{fig:lab}). The three ideal voltage sources are implemented using  a MTL-CBI0060F12IXHF GUASCH three-phase IGBT full-bridge power inverter with a rated power of $3.3$ kVA at $110$ V$_{rms}$ and $10$ A$_{rms}$ (central card in Fig.~\ref{fig:inv}),  connected though a $LC$ filter and a transformer. { The power inverter operates as a  switched power supply and the LC filter plus transformer are only used for filtering purposes in order to emulate the ideal  voltage sources. Hence, $v_a$, $v_b$ and $v_c$ in Fig.~\ref{fig:scheme3} corresponds to the line voltages found after the transformer in the  experimental setup.
The inductances $L_a$, $L_b$ and $L_c$ and loads $R$ given in Example~\ref{e:ex1} have been reproduced as follows. The load is composed by three heaters (one per line), and one line  has an additional serial inductance (Fig.~\ref{fig:load}) for creating the imbalance,  all reproducing the values given in Example~\ref{e:ex1}}.  The input of the  inverter is supplied by a Cinergia B2C+DC power source. The decoupling controller (\ref{eq:decontroller}) is implemented at the inverter on the F28M36 digital signal processor (DSPs) from Texas Instruments and executed every $T_s=100$~$\mu$s. The variables of interest, output voltages and currents, are measured and sent to a computer for its treatment and plotting.

The deployment of the controller to the F28M36 digital signal processor (DSP) is straightforward, achieved by applying the inverse of the variable transformation defined in eq.~(\ref{eq:transrg}). Subsequently, the controller is discretized using a pre-warping technique at 60 Hz. The resulting difference equations are then implemented as the controller's code. The implementation is carried out in ANSI C, using the compiler provided by Texas Instruments for this purpose and this DSP.

\begin{figure}[!t]
\centering
\subfloat[\small Inverter]{
\includegraphics[width=4.25cm]{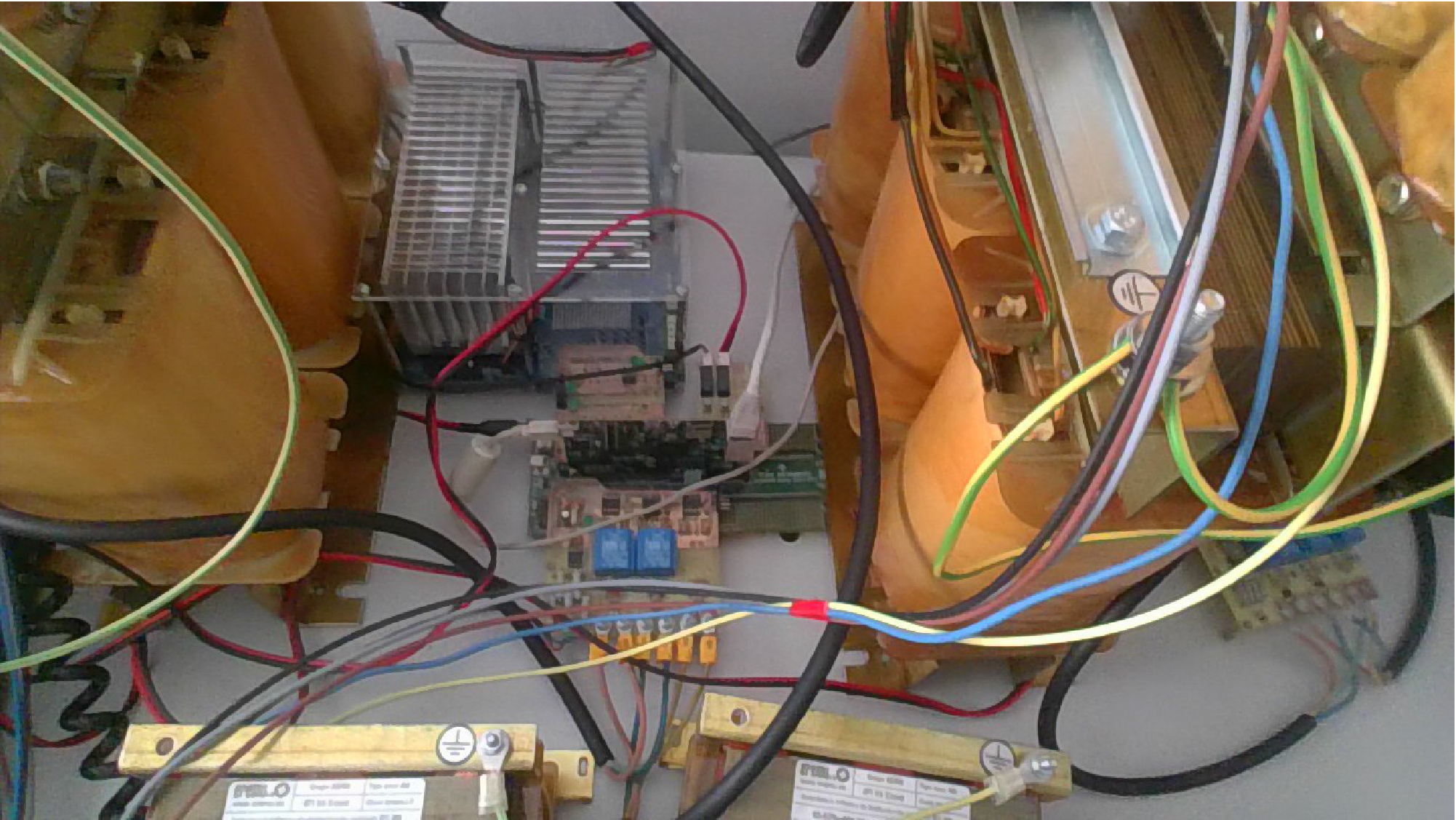}
\label{fig:inv}}
\subfloat[\small Unbalanced load]{
\includegraphics[width=4.25cm]{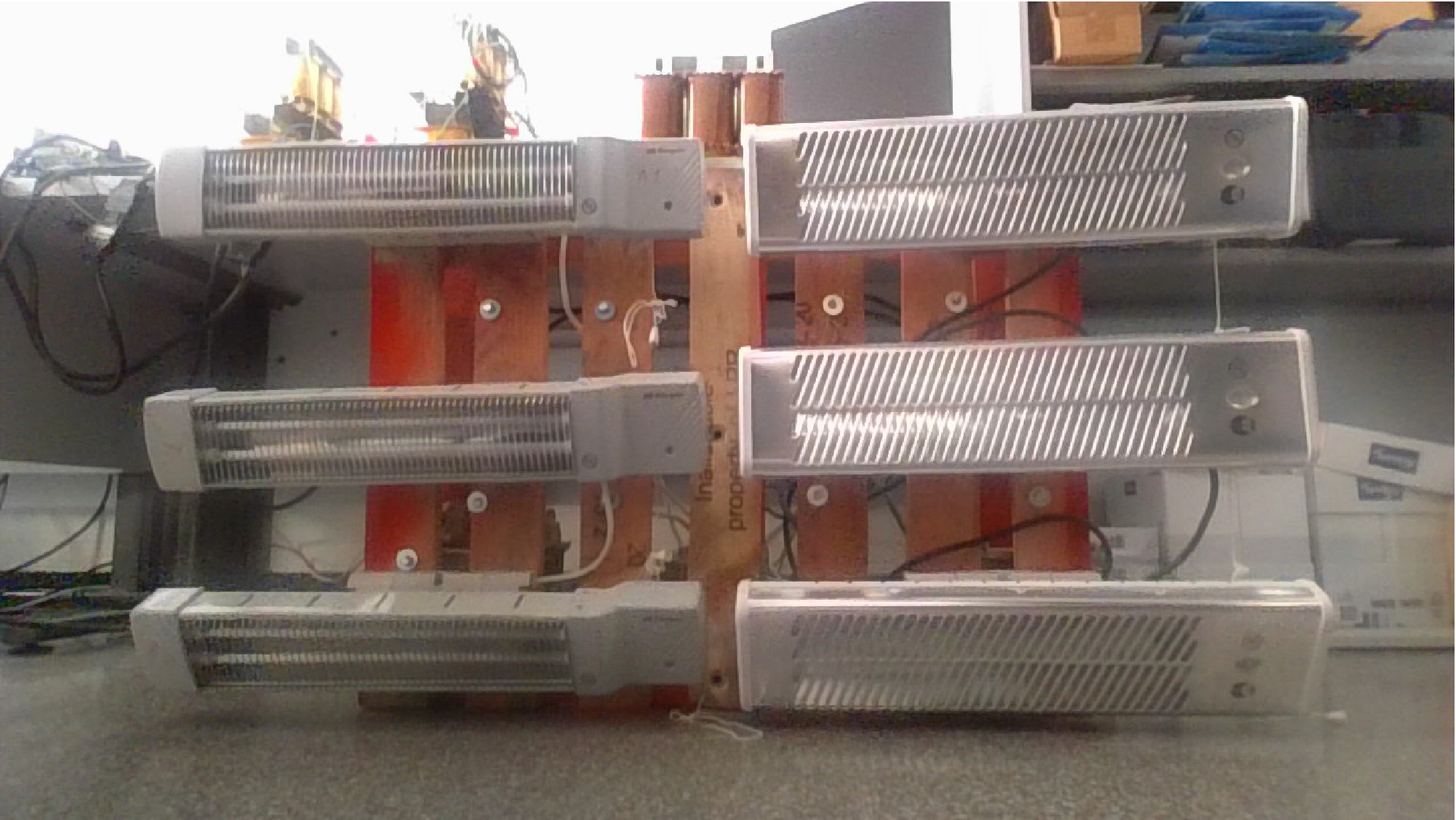}
\label{fig:load}}
\caption{Laboratory set-up.}
\label{fig:lab}
\end{figure}

\begin{figure}[t]
\centering
\subfloat[\small Open-loop]{
\includegraphics[width=7cm]{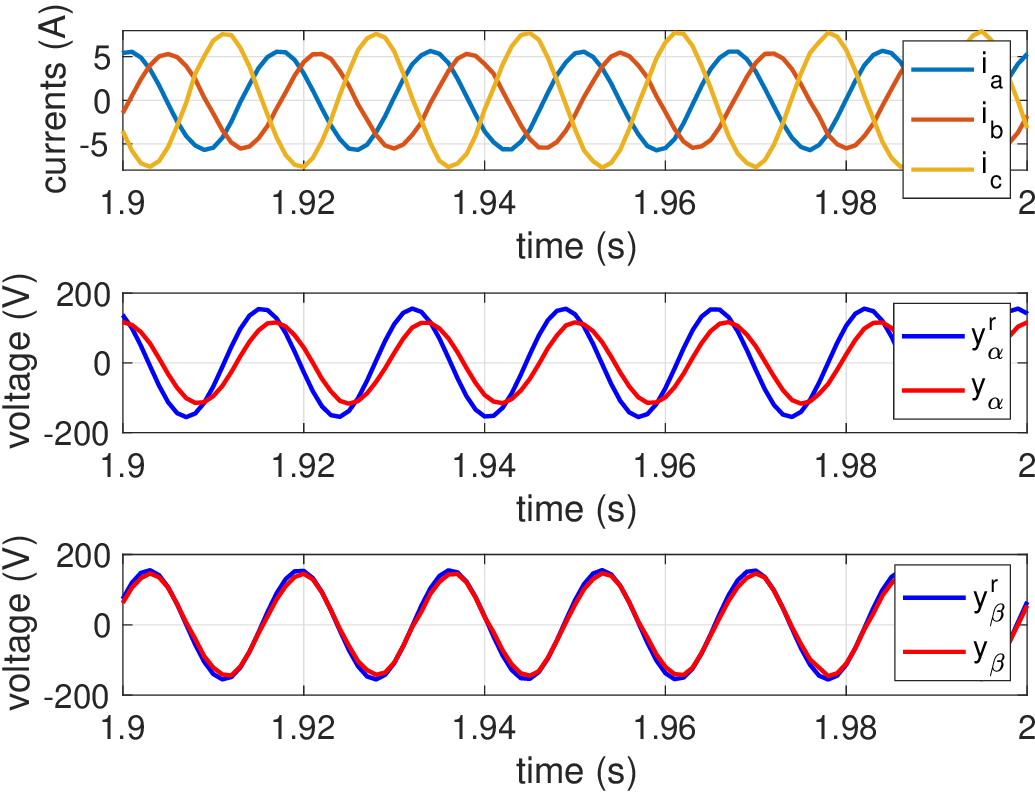}
\label{fig:ol}}
\vspace{0.3cm}
\subfloat[\small Closed-loop]{
\includegraphics[width=7cm]{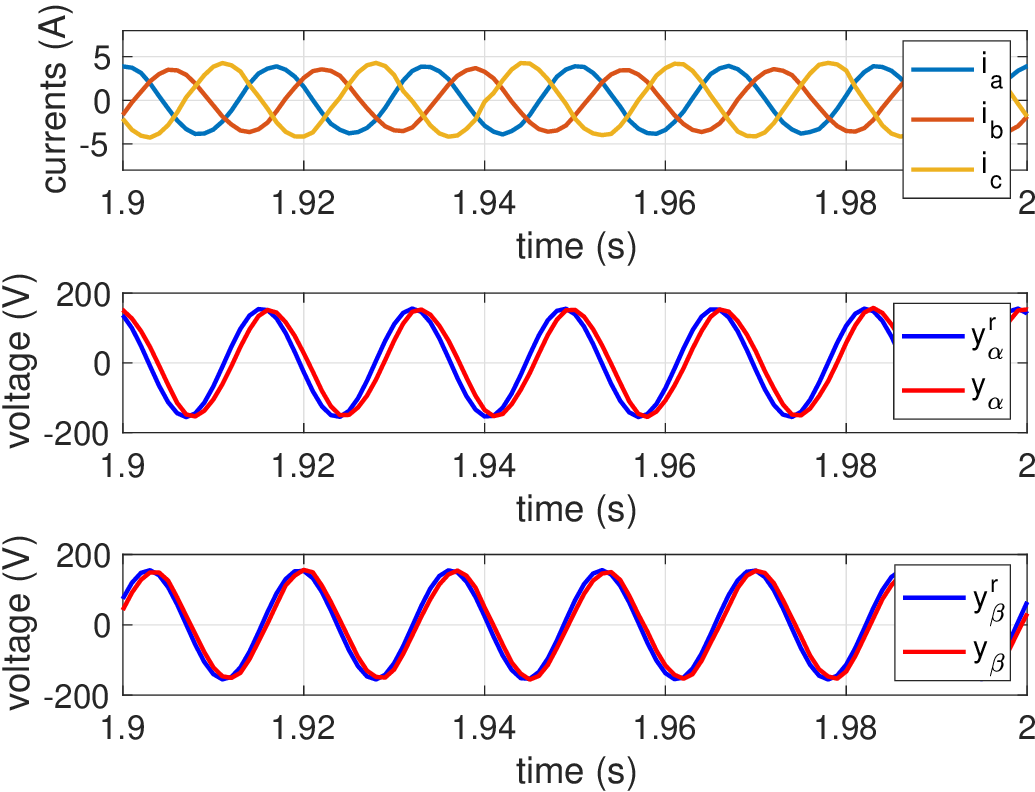}
\label{fig:cl}}
\caption{Experimental results: Geometric controller experiment.}
\label{fig:exp}
\end{figure}

Fig. \ref{fig:exp} shows the main results for the unbalanced case ($L_a=L_c$ and $L_b=L_u$), for both the open-loop (Fig.~\ref{fig:ol}) and closed-loop  (Fig.~\ref{fig:cl}) scenarios. For each scenario, the top plot shows  the measures of the three-phase output currents, $i^o_a$, $i^o_b$ and $i^o_c$, and the two plots below show the tracking performance (voltage reference  and output voltage for each $\alpha \beta$ channel, which are obtained from the output voltage measures  $v^o_a$, $v^o_b$ and $v^o_c$). For the open-loop scenario, Fig.~\ref{fig:ol}, the currents are clearly unbalanced, indicating an existing coupling, while voltage tracking performance is poor. The application of the decoupling controller (\ref{eq:decontroller}) achieves, apart from stable dynamics, balanced currents and decoupled voltage dynamics with the designed tracking accuracy. 

\section{Conclusions}\label{ss:con}
 Motivated by limitations that traditional modeling approaches have for unbalanced systems, this paper has introduced the use of GA for the dynamic modeling, analysis and controller design of three-phase electrical systems. The modeling approach, based on defining  a new GA mathematical framework, allows representing a balanced or an unbalanced three phase electrical system (which are multivariable systems) with a GA-valued linear SISO model whose plant is defined by a single GA-TF. Moreover, it has been shown that the stability analysis in the new GA domain simplifies to analyzing the roots of a real-valued polynomial, as it is also done  in the case of standard real-valued linear SISO models. Regarding the controller design phase, the Youla parametrization has been extended to the GA domain for the design of stable and decoupling controllers. 
 
The introduced GA approach for the analysis and control design of three-phase electrical systems  is a starting point for building a new GA-based systems theory. Therefore, it opens  a wide range of  new research directions that deserve being explored, from the electrical systems analysis point of view like reformulating a GA-based Ohm's law, or from the control systems perspective like extending frequency domain stability tools to the new  GA space (e.g. developing a new GA-based Nyquist criterion).  Moreover, presented results like the GA-stability condition should be updated to include robustness properties, and the GA decoupling  strategy calls for a comparison with existing MIMO paring/decoupling tools. The GA framework is demonstrated on an inverter with an unbalanced load for clarity and will be extended to larger unbalanced systems in future work.

\section*{Acknowledgments}

The work of I. Zaplana was partially supported by the Spanish National Project PID2020-114819GB-I00 funded by MICIU/AEI/10.13039/501100011033, and by the Generalitat de Catalunya through the Project 2021 SGR 00375. The work of A. Dòria-Cerezo was partially supported by the Spanish National Project PID2021-122821NB-I00 funded by MICIU/AEI/10.13039/501100011033 and ERDF/EU, and by the Generalitat de Catalunya through the Project 2021 SGR 00376. M. Velasco, J. Duarte and P. Martí were partially supported by the Spanish National Project PID2021-122835OB-C21 funded by MICIU/AEI/10.13039/501100011033.

{\appendices

\section*{Appendix: GA Basic Concepts}\label{ss:ap2}

In general, let $\mathbb{R}^{p+q}$ be a real vector space, where $p$ and $q$ are the number of basis vectors that square to $1$ and $-1$, respectively, i.e., the dimension of this real vector space is $n = p + q$. The associated GA
$\mathcal{G}_{p,q}(\mathbb{R})$ has $2^n$ basis elements, and the  objects of this algebra, called multivectors, are  linear combinations of them, where the coefficients belong to $\mathbb{R}$. The core idea of GA is its multiplication operation, called the geometric product 
~\cite{Hes84}. Every geometric algebra $\mathcal{G}_{p,q}(\mathbb{R})$ has as scalar basis element, which is denoted by $e_0$, and it plays the role of the identity for the geometric product.
If instead of $\mathbb{R}^{p+q}$, an arbitrary vector space over a field is considered, the associated GA is constructed in an analogous manner. 

\begin{observation}[Real numbers]
The GA representation of the real numbers space, $\mathbb{R}$, is given by $\mathcal{G}_{0,0}(\mathbb{R})$, or simply $\mathcal{G}_{0,0}$, where the only basis element is $e_0$. Hence, $\mathrm{a} \in \mathbb{R}$ can be represented as $\mathrm{a}e_0\in \mathcal{G}_{0,0}$. 
\end{observation}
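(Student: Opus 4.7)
The plan is to verify the observation directly from the general definition of $\mathcal{G}_{p,q}(\mathbb{R})$ stated earlier in the Appendix, by specializing to $p=q=0$. First I would substitute these values into the definition: the ambient real vector space is $\mathbb{R}^{p+q}=\mathbb{R}^{0}=\{0\}$, which contains no basis vectors squaring to $+1$ or $-1$. The number of GA basis elements is $2^{n}=2^{0}=1$, so the algebra has a single basis element. Since every $\mathcal{G}_{p,q}(\mathbb{R})$ is declared in the preceding paragraph to contain the scalar basis element $e_0$ (the identity of the geometric product), that single basis element must be $e_0$ itself.

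Next I would conclude from the definition of multivectors as real linear combinations of the basis elements that every element of $\mathcal{G}_{0,0}$ has the form $\mathrm{a}e_0$ with $\mathrm{a}\in\mathbb{R}$. To justify that this identifies $\mathcal{G}_{0,0}$ with $\mathbb{R}$ as an algebra (so that the phrase ``GA representation of $\mathbb{R}$'' is warranted), I would exhibit the map $\varphi:\mathbb{R}\to\mathcal{G}_{0,0}$ given by $\varphi(\mathrm{a})=\mathrm{a}e_0$ and check the three routine properties: $\varphi$ is an $\mathbb{R}$-linear bijection by construction; it preserves multiplication because $(\mathrm{a}e_0)(\mathrm{b}e_0)=\mathrm{ab}\,e_0e_0=\mathrm{ab}\,e_0$, using that $e_0$ is the identity for the geometric product; and it sends $1\in\mathbb{R}$ to $e_0$. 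These checks together give an algebra isomorphism $\mathbb{R}\cong\mathcal{G}_{0,0}$, confirming the final sentence of the observation.

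There is essentially no obstacle here: the statement is a direct consequence of the definition of $\mathcal{G}_{p,q}(\mathbb{R})$ specialized to the degenerate case $p=q=0$. The only subtlety worth flagging explicitly is the role of $e_0$ as the multiplicative identity, since without invoking that property one only obtains a vector-space isomorphism rather than an algebra isomorphism; making that point explicit is what turns the verification from a dimensional count into a genuine identification of $\mathbb{R}$ with $\mathcal{G}_{0,0}$ as the stated ``GA representation.''
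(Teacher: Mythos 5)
Your proposal is correct and takes essentially the same (and only natural) route as the paper: the paper states this observation without a separate proof, treating it as an immediate specialization of the Appendix's definition of $\mathcal{G}_{p,q}(\mathbb{R})$ to $p=q=0$, which is exactly your count $2^{n}=2^{0}=1$ basis element together with the stipulation that every such algebra contains the scalar identity $e_0$. Your additional explicit check that $\varphi(\mathrm{a})=\mathrm{a}e_0$ is an algebra isomorphism (not merely a linear bijection, using $e_0e_0=e_0$) is a routine elaboration of what the paper leaves implicit, not a different approach.
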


\begin{observation}[Complex numbers]
The GA representation of the complex numbers space, $\mathbb{C}$, is given by $\mathcal{G}_{0,1}(\mathbb{R})$, or simply $\mathcal{G}_{0,1}$, where the only basis element besides $e_0$ is $e_1=j$ (that squares  $-1$). Hence, $\mathrm{a}+j\mathrm{b} \in \mathbb{C}$, with $\mathrm{a}, \mathrm{b}\in \mathbb{R}$, can be represented as $\mathrm{a}e_0+\mathrm{b}e_1\in \mathcal{F}_{0,1}$.
\end{observation}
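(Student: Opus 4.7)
The plan is to verify the observation by exhibiting an explicit algebra isomorphism $\varphi:\mathbb{C}\to\mathcal{G}_{0,1}$ defined on a generic element by $\varphi(\mathrm{a}+j\mathrm{b})=\mathrm{a}e_0+\mathrm{b}e_1$. The claim then reduces to showing that $\varphi$ is $\mathbb{R}$-linear, bijective, and multiplicative, where the multiplication on the codomain is the geometric product. Since the statement is labelled an observation, the intention is not to prove a deep result but to confirm that the general construction $\mathcal{G}_{p,q}(\mathbb{R})$ from the preceding paragraph, when specialized to $p=0$, $q=1$, actually recovers the usual complex numbers.

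First I would specialize the general recipe: with $p=0$ and $q=1$, the underlying real vector space is $\mathbb{R}^{1}$, so the associated GA has $2^{0+1}=2$ basis elements. By the preamble, these are the scalar identity $e_0$ together with one vector basis element $e_1$, and the defining convention for the $q$ basis vectors forces $e_1^{\,2}=-1$. This fixes the multiplication table of $\mathcal{G}_{0,1}$ completely: $e_0$ is the geometric-product identity, $e_0 e_1 = e_1 e_0 = e_1$, and $e_1 e_1 = -e_0$.

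Next I would check that $\varphi$ is a bijective $\mathbb{R}$-linear map. Linearity is immediate from the definition, and bijectivity follows from the fact that every multivector in $\mathcal{G}_{0,1}$ has a unique decomposition $\mathrm{a}e_0+\mathrm{b}e_1$ with $\mathrm{a},\mathrm{b}\in\mathbb{R}$, so $\varphi$ has a two-sided inverse $\mathrm{a}e_0+\mathrm{b}e_1\mapsto \mathrm{a}+j\mathrm{b}$. Finally, I would verify multiplicativity by expanding $(\mathrm{a}+j\mathrm{b})(\mathrm{c}+j\mathrm{d})$ using $j^{2}=-1$ on one side and $(\mathrm{a}e_0+\mathrm{b}e_1)(\mathrm{c}e_0+\mathrm{d}e_1)$ using $e_1^{\,2}=-1$ on the other; the two outputs match term by term, and applying $\varphi$ to the first yields the second.

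The only real obstacle is conceptual rather than computational: one must accept that the axiom $e_1^{\,2}=-1$ built into the GA construction carries exactly the algebraic content of the complex imaginary unit, so that the embedding of $\mathrm{span}_{\mathbb{R}}\{e_1\}$ into $\mathcal{G}_{0,1}$ behaves identically to the embedding of $j\mathbb{R}$ into $\mathbb{C}$. Once this identification is made, the rest of the proof is mechanical and uses only the multiplication table established in the first step.
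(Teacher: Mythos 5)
Your proposal is correct and matches the paper's reasoning: the paper states this observation without proof, treating it as an immediate specialization of the general construction $\mathcal{G}_{p,q}(\mathbb{R})$ to $p=0$, $q=1$, and your explicit isomorphism $\varphi(\mathrm{a}+j\mathrm{b})=\mathrm{a}e_0+\mathrm{b}e_1$ with the multiplication-table check is exactly the routine verification the paper implicitly appeals to. (Incidentally, the codomain in the statement should arguably read $\mathcal{G}_{0,1}$ rather than $\mathcal{F}_{0,1}$, since the coefficients here are reals, not transfer functions; your proof correctly works in $\mathcal{G}_{0,1}$.)
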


\begin{observation}[Complex-valued transfer functions, C-TF]
The GA representation of the C-TF space is given by $\mathcal{G}_{0,1}(\mathbb{F})$, or $\mathcal{F}_{0,1}$ to distinguish it from the GA representation of the complex numbers. Its basis elements are $e_0$ and $e_1=j$. Hence, the C-TF $\mathrm{G_a(\mathrm{p})}+j\mathrm{G_b(\mathrm{p})}$ with $\mathrm{G_a(\mathrm{p})}, \mathrm{G_b(\mathrm{p})}\in \mathbb{F}$ can be represented as $\mathrm{G_a(\mathrm{p})}e_0+\mathrm{G_b(\mathrm{p})}e_1\in \mathcal{F}_{0,1}$.
\end{observation}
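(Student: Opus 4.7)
The plan is to verify the claimed identification by instantiating the general GA construction from the appendix with $p=0$, $q=1$, and taking the underlying coefficient ring to be the R-TF space $\mathbb{F}\equiv\mathcal{F}_{0,0}$ rather than $\mathbb{R}$, which is explicitly allowed by the closing remark of the appendix's construction. With $n=p+q=1$, the algebra has $2^n=2$ basis elements: the scalar identity $e_0$ and a single vector $e_1$ which, because it is counted among the $q=1$ vectors, satisfies $e_1^2 = -e_0$. Thus the underlying module of $\mathcal{F}_{0,1}$ is free of rank two over $\mathbb{F}$, with every multivector uniquely of the form $G_a(\mathrm{p})e_0 + G_b(\mathrm{p})e_1$, $G_a,G_b\in\mathbb{F}$.

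Second, I would exhibit the natural candidate map
\[
\Phi:\; G_a(\mathrm{p})+jG_b(\mathrm{p}) \;\longmapsto\; G_a(\mathrm{p})e_0 + G_b(\mathrm{p})e_1,
\]
and check it is an $\mathbb{F}$-algebra isomorphism between C-TFs and $\mathcal{F}_{0,1}$. Additivity and $\mathbb{F}$-linearity are componentwise on both sides. For multiplicativity, using bilinearity of the geometric product, the unit role of $e_0$, and $e_1^2=-e_0$, one expands
\[
\bigl(G_a e_0+G_b e_1\bigr)\bigl(H_a e_0+H_b e_1\bigr) = (G_aH_a-G_bH_b)\,e_0 + (G_aH_b+G_bH_a)\,e_1,
\]
which matches the real and imaginary components of $(G_a+jG_b)(H_a+jH_b)$. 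Bijectivity is immediate from the uniqueness of the basis decomposition.

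Third, I would tie this to the notational choice $e_1=j$ in the statement: the relation $e_1^2=-e_0$ is precisely the defining identity $j^2=-1$ of the imaginary unit, so identifying $e_1$ with $j$ is consistent and merely renames the single grade-one basis element of $\mathcal{F}_{0,1}$.

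I do not anticipate a substantive obstacle, since the observation is essentially unpacking a definition once the general construction of $\mathcal{G}_{p,q}$ is available. The only subtle point is conceptual rather than technical: one must view $\mathbb{F}$ (a ring of rational functions in $\mathrm{p}$) in the role that $\mathbb{R}$ played in the default description, and accordingly reserve the notation $\mathcal{F}_{0,1}$ (rather than $\mathcal{G}_{0,1}$) to signal that coefficients are transfer functions rather than scalars, matching the convention introduced earlier in the Notation paragraph.
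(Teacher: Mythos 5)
Your proposal is correct and coincides with the paper's treatment: the paper states this observation without a separate proof, precisely because it follows by instantiating the appendix's general construction of $\mathcal{G}_{p,q}$ with $p=0$, $q=1$ over the coefficient field $\mathbb{F}$ (as licensed by the appendix's closing remark), which is exactly your first step. Your explicit verification that $e_1^2=-e_0$ makes the map $\Phi$ an $\mathbb{F}$-algebra isomorphism reproducing complex multiplication is just the routine unpacking the paper leaves implicit, so there is nothing substantively different to compare.
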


\begin{observation}[Geometric-valued transfer functions, GA-TF]
	The GA representation of the GA-TF space is given by $\mathcal{G}_{2,0}(\mathbb{F})$, or simply $\mathcal{F}_{2,0}$. Its basis elements are $e_0, e_1, e_2$ and $e_1e_2=e_{12}$, where $e_1e_2$ denotes the geometric product between vectors $e_1$ and $e_2$. Hence, a GA-TF can be represented as $\mathrm{G_a(\mathrm{p})}e_0+\mathrm{G_b(\mathrm{p})}e_1+\mathrm{G_c(\mathrm{p})}e_2+\mathrm{G_d(\mathrm{p})}e_{12}\in \mathcal{F}_{2,0}$ with $\mathrm{G_a(\mathrm{p})}, \mathrm{G_b(\mathrm{p})}, \mathrm{G_c(\mathrm{p})}, \mathrm{G_d(\mathrm{p})}\in \mathbb{F}$, i.e., R-TFs.
\end{observation}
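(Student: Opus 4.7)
The plan is to derive Observation 4 as a direct specialization of the general construction sketched in the Appendix preamble, instantiated with $p=2$, $q=0$, and with the coefficient ring taken to be $\mathbb{F}$ (the R-TF space) rather than $\mathbb{R}$. So the work is to (i) count the basis, (ii) list it explicitly, and (iii) invoke the coefficient-lift from $\mathbb{R}$ to $\mathbb{F}$.

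First, I would apply the stated general rule that $\mathcal{G}_{p,q}$ has $2^{p+q}$ basis elements. With $p=2$, $q=0$, this forces the basis to have cardinality $2^{2}=4$. Second, I would enumerate those four elements explicitly by grade: the scalar basis $e_0$ (always present as the identity of the geometric product); the $p+q=2$ vector generators $e_1$ and $e_2$, both of which square to $+1$ because $q=0$; and the higher-grade products built from these generators. Because there are only two vector generators, the sole nontrivial higher-grade element is the bivector $e_1 e_2$, which the paper denotes $e_{12}$; there is no room for a trivector or for a second independent bivector since any further product collapses by the generator relations $e_i^2 = e_0$ and $e_1 e_2 = -e_2 e_1$ back to a linear combination of $\{e_0, e_1, e_2, e_{12}\}$. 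This gives exactly the four basis elements claimed, and no more.

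Third, I would invoke the last sentence of the Appendix preamble, which allows replacing $\mathbb{R}$ by an arbitrary coefficient field when building the associated GA. Taking that field to be $\mathbb{F}$, the underlying basis is unchanged and only the coefficient ring differs, so every multivector in $\mathcal{G}_{2,0}(\mathbb{F})=\mathcal{F}_{2,0}$ is a linear combination
\begin{equation*}
G_a(\mathrm{p})\,e_0 + G_b(\mathrm{p})\,e_1 + G_c(\mathrm{p})\,e_2 + G_d(\mathrm{p})\,e_{12}
\end{equation*}
with coefficients $G_a(\mathrm{p}),G_b(\mathrm{p}),G_c(\mathrm{p}),G_d(\mathrm{p}) \in \mathbb{F}$, i.e.\ R-TFs, which is exactly the representation claimed.

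The only real obstacle is the completeness check in step two: one must verify that no further independent higher-grade elements arise. This is a short calculation using antisymmetry of the geometric product between distinct orthogonal generators and the squaring rules $e_1^2 = e_2^2 = e_0$, which together ensure every word in $e_1,e_2$ reduces to one of the four listed basis elements. Once this is in place, the observation follows without further work.
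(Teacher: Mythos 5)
Your proposal is correct and follows exactly the route the paper itself takes: the observation is stated as a direct instantiation of the general construction in the Appendix preamble ($2^{p+q}$ basis elements with $p=2$, $q=0$, coefficients lifted from $\mathbb{R}$ to the field $\mathbb{F}$ of R-TFs), which is precisely your steps (i)--(iii). Your explicit closure check that every word in $e_1,e_2$ reduces to $\{e_0,e_1,e_2,e_{12}\}$ via $e_i^2=e_0$ and $e_1e_2=-e_2e_1$ is a detail the paper leaves implicit, but it is the same argument.
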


}

 
%

\bibliographystyle{Bibliography/IEEEtranTIE}
\bibliography{Bibliography/IEEEabrv,Bibliography/FPNSC} 

\begin{IEEEbiography}[{\includegraphics[width=26mm, height=32mm, clip, keepaspectratio]{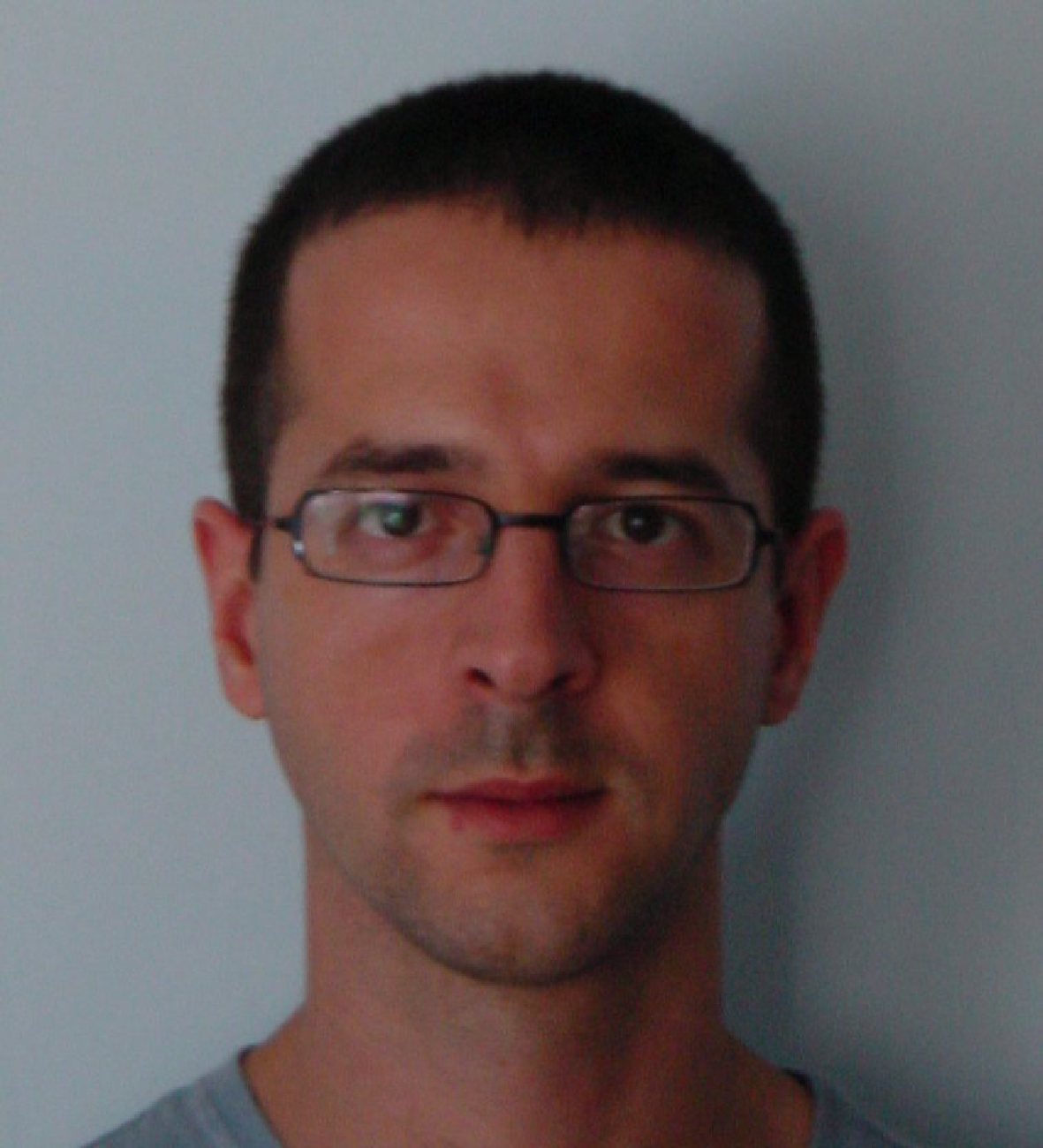}}]{Manel Velasco}
received the graduate degree in maritime engineering and the Ph.D. degree in automatic control from the Universitat Politècnica de Catalunya (UPC), Barcelona, Spain, in 1999 and 2006, respectively.
Since 2002, he has been an Assistant Professor with the Department of Automatic Control, Technical University of Catalonia. His research interests include artificial intelligence, real-time control systems, collaborative control systems, and microgrids.
\end{IEEEbiography}
\vspace{-1.25cm}
\begin{IEEEbiography}[{\includegraphics[width=26mm, height=32mm, clip, keepaspectratio]{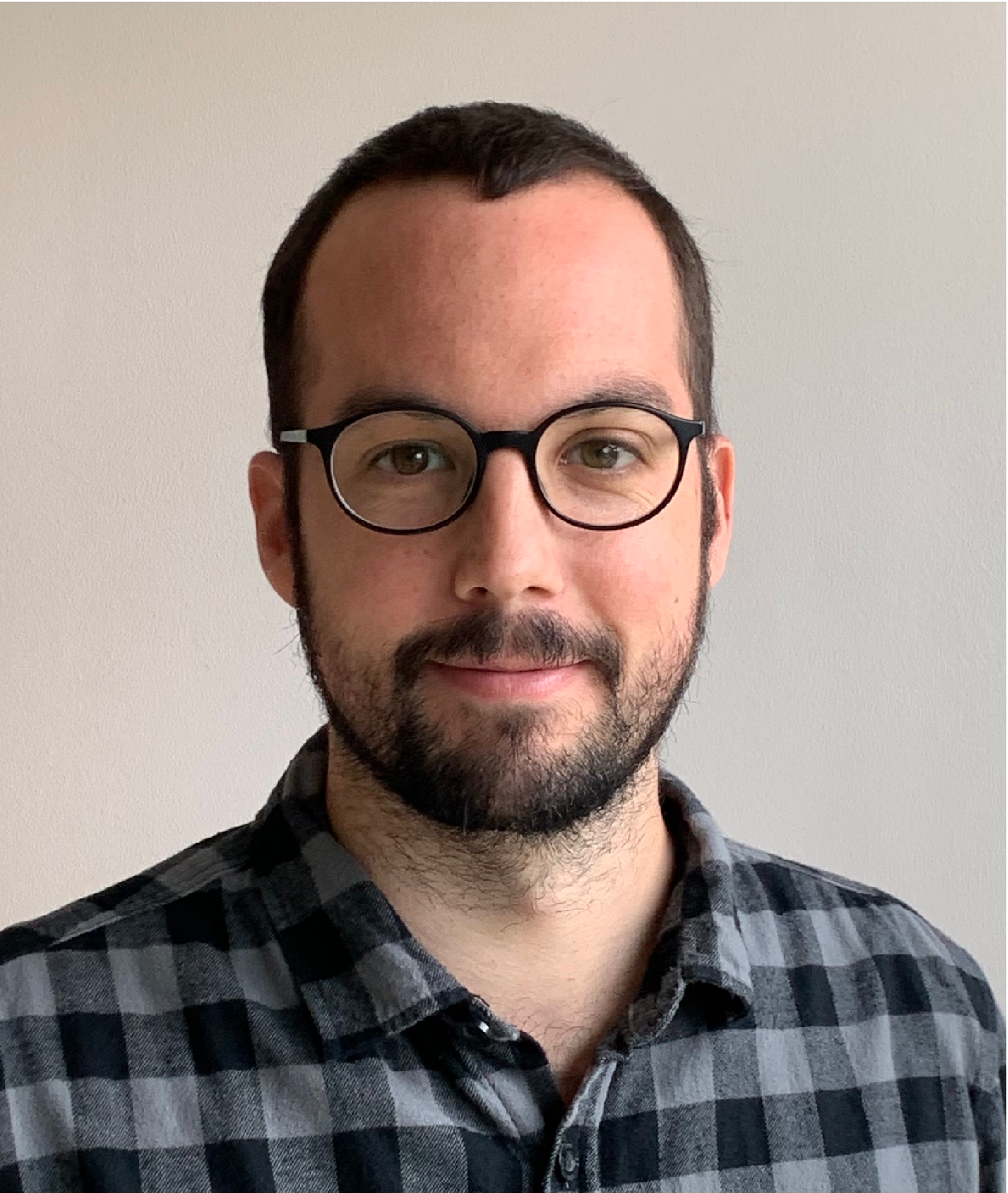}}]{Isiah Zaplana}
received a degree in mathematics and a Ph.D. degree in robotics, automatic control and vision from the Universitat Politècnica de Catalunya (UPC), Barcelona, Spain, in 2012 and 2018, respectively. He is currently serving as an assistant professor in robotics and computer vision at the same university. His research interests include the development of geometric, algebraic and numerical tools for robotics, vision and control.
\end{IEEEbiography}
\vspace{-1.25cm}
\begin{IEEEbiography}[{\includegraphics[width=26mm, height=32mm, clip, keepaspectratio]{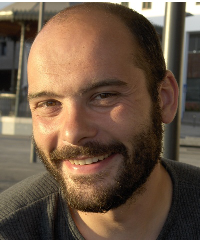}}]{Arnau Dòria-Cerezo}
received an undergraduate degree in electromechanical engineering from the Universitat Politècnica de Catalunya (UPC), Barcelona, Spain, in 2001, a DEA degree in industrial automation from the INSA-Lyon, France, in 2001, and a PhD degree in advanced automation and robotics from UPC in 2006. He is currently an Associate Professor with the Dept. of Electrical Engineering, UPC, and carries on his research activities at the Institute of Industrial and Control Engineering, UPC. 
\end{IEEEbiography}
\vspace{-1.25cm}
\begin{IEEEbiography}[{\includegraphics[width=26mm, height=32mm, clip, keepaspectratio]{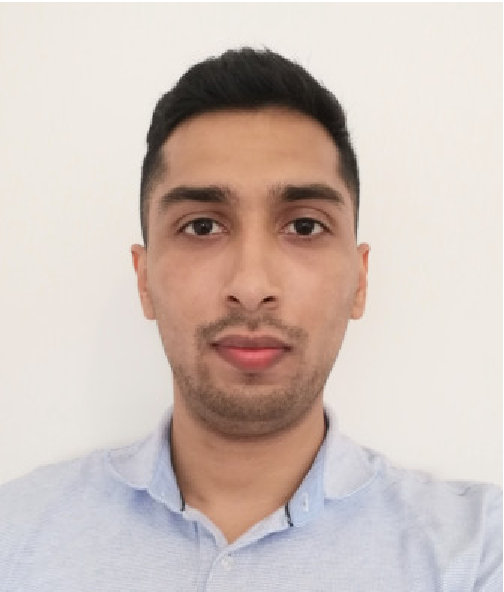}}]{Josué Duarte}
received the B.S. degree in mechatronic engineering from the Central American Technological University (UNITEC), Honduras, in 2018, and the Ph.D. degree in Automatic Control from the Universitat Politècnica de Catalunya (UPC), Barcelona, Spain, in 2024. He is currently an Assistant Professor in the Department of Electrical Engineering at the Escola Universitaria Salesiana de Sarria, Autonomous University of Barcelona, Barcelona, Spain. His research interests include control systems, microgrids, and power electronics.
\end{IEEEbiography}
\vspace{-1cm}
\begin{IEEEbiography}[{\includegraphics[width=26mm, height=32mm, clip, keepaspectratio]{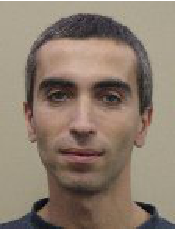}}]{Pau Martí}
received the degree in computer science and the Ph.D. degree in automatic control from the Universitat Politècnica de Catalunya (UPC), Barcelona, Spain, in 1996 and 2002, respectively. From 1996, he has held different research and teaching positions at the Department of Automatic Control at UPC. His research interests include embedded and networked control systems, nonlinear control and microgrids.
\end{IEEEbiography}

\end{document}